\let\@font@warningori\@font@warning
\newcommand\shutup{\def\@font@warning##1{}}
\newcommand\youcanspeak{\let\@font@warning\@font@warningori}
\newtheoremstyle{theorem}{1em}{1em}{\slshape}{0pt}{\bfseries}{.}{ }{}
\theoremstyle{theorem}
\newtheorem{theorem}{Theorem}
\newtheorem*{theorem*}{Theorem}
\newtheorem{corollary}[theorem]{Corollary}
\newtheorem{lemma}[theorem]{Lemma}
\theoremstyle{remark}
\newtheorem*{remark*}{Remark}
\providecommand{\setN}{\mathbb{N}}
\providecommand{\setZ}{\mathbb{Z}}
\providecommand{\setQ}{\mathbb{Q}}
\providecommand{\setR}{\mathbb{R}}
\newcommand{\conv}[1]{\textrm{conv}(#1)} 
\newcommand{\cone}{\textrm{cone}} 
\newcommand{\supp}{\textrm{supp}}
\newcommand{\vertices}{\textrm{vert}}
\newcommand{\sign}{\textrm{sign}}
\newcommand{\enc}{\textrm{enc}}
\DeclareMathAlphabet{\pazocal}{OMS}{zplm}{m}{n}
\def\parallelepiped{\pi}  % symbol for parallelepiped (was \Pi, now it's \pi)
\def\parallelepipedset{\Pi} % symbol for the set of parallelepiped (was \bm{\Pi}, now it's \Pi)
\begin{document}
%\begin{comment}
\title{Polynomiality for Bin Packing with a Constant Number of Item Types\thanks{A conference version of this paper appeared at SODA'14.}} %in fixed dimension} 

\author{Michel X. Goemans\thanks{MIT, Email: {\tt goemans@math.mit.edu}. Supported by ONR grants N00014-11-1-0053, N00014-17-1-2177 and NSF contract 1115849.} \and Thomas Rothvoss\thanks{University of Washington, Seattle. Email: {\tt rothvoss@uw.edu}. Supported by an Alfred P. Sloan Research Fellowship; a David and Lucile Packard Fellowship and NSF CAREER Award 1651861. Research was done while the 2nd author was postdoc at MIT.}}%\thanks{MIT, Email: {\tt rothvoss@math.mit.edu}}} % \\{ \tt{rothvoss@math.mit.edu}}}

\maketitle

\begin{abstract}
We consider the bin packing problem with $d$ different item sizes $s_i$ and 
item multiplicities $a_i$, where all numbers are given in binary encoding. 
This problem formulation is also known as the \emph{1-dimensional cutting stock problem}.
%Several authors have asked whether or not this problem admits a polynomial
%time algorithm if $d$ is constant. %In fact, one paper even conjectures it to
%be $\mathbf{NP}$-hard 

In this work, we provide an algorithm which, for constant $d$, solves bin packing 
in polynomial time. % even if all the data is binary encoded. 
This was an open problem for all $d\geq3$. % as stated by several authors.

In fact, for constant $d$ our algorithm solves the following problem in polynomial time:
given two $d$-dimensional polytopes $P$ and $Q$, find the smallest number of integer points
in $P$ whose sum lies in $Q$.

Our approach also applies to \emph{high multiplicity} scheduling problems
in which the number of copies of each job type is given in binary encoding and each 
type comes with certain parameters such as release dates, processing times and deadlines.
We show that a variety of high multiplicity scheduling problems can be solved in 
polynomial time if the number of job types is constant.
%and potentially exponentially large.
%in which we are given a constant number of job types, but potentially exponentially many 
%copies of each type. 
%each type with certain characteristics
%The main technical contribution lies in a surprising structure theorem 
%for conic integer combinations.
%For example for bin packing it implies that a polynomial number of special patterns can be 
%pre-computed such that obliviously for \emph{every} multiplicity vector $a$, almost all of 
%the weight in an optimum solution would be distributed on those patterns.
\end{abstract}

\section{Introduction}

Let $(s,a)$ be an instance for \emph{bin packing} with \emph{item sizes}
$s_1,\ldots,s_d \in [0,1]$ and a vector $a \in \setZ^d_{\geq0}$ of \emph{item multiplicities}.
In other words, our instance contains $a_i$ many copies of an item of size $s_i$.
In the following we assume that $s_i$ is given as a rational number and $\Delta$
is the largest number appearing in the denominator of $s_i$ or the multiplicities $a_i$.
Let $\pazocal{P} := \{ x \in \setZ_{\geq0}^d \mid s^Tx \leq 1\}$.
Now the goal is to select a minimum number of vectors from $\pazocal{P}$ that sum up to $a$, i.e.
\begin{equation} \label{eq:BinPacking}
  \min\Big\{ \bm{1}^T\lambda \mid \sum_{x \in \pazocal{P}} \lambda_x \cdot x = a; \; \; \lambda \in \setZ_{\geq0}^{\pazocal{P}} \Big\}
\end{equation}
where $\lambda_x$ is the \emph{weight} that is given to $x \in \pazocal{P}$. This problem is also known
as the \emph{(1-dimensional) cutting stock problem} and its study goes back to the classical paper
by Gilmore and Gomory~\cite{Gilmore-Gomory61}.
Note that even for fixed dimension $d$, the problem is that both, the number of points $|\pazocal{P}|$
and the weights $\lambda_x$ will be exponentially large.
Let $OPT$ and $OPT_f$ be the optimum integral and fractional solution to \eqref{eq:BinPacking}.
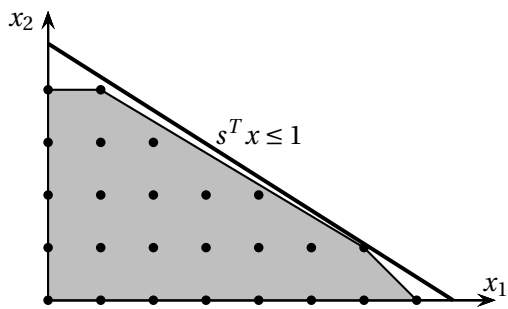
\begin{figure}
\begin{center}
\psset{unit=0.7cm}
\begin{pspicture}(-1,-0.2)(10,6)
\psaxes[ticks=none,labels=none,arrowsize=5pt]{->}(0,0)(0,0)(8.5,5.5)
\pspolygon[fillstyle=solid,fillcolor=lightgray](0,0)(0,4)(1,4)(6,1)(7,0)
\def\sI{0.13}
\def\sII{0.205}
\FPeval{\sIrec}{1.0 / \sI}
\FPtrunc{\sIrec}{\sIrec}{4}
\FPeval{\sIIrec}{1.0 / \sII}
\FPtrunc{\sIIrec}{\sIIrec}{4}
\psline[linewidth=1.5pt](\sIrec,0)(0,\sIIrec)
\FPtrunc{\maxI}{\sIrec}{0}
\FPtrunc{\maxII}{\sIIrec}{0}
\FPeval{\maxIp}{\maxI + 1} \FPtrunc{\maxIp}{\maxIp}{0}
\FPeval{\maxIIp}{\maxII + 1} \FPtrunc{\maxIIp}{\maxIIp}{0}
\multido{\I=0+1}{\maxIp}{ % x-Koordinate
 \FPeval{\iter}{((1 - \I * \sI) / \sII) + 1}  % maximales ganzzahliges y+1 so dass (x,y) im Knapsack Polytop fuer x=\I
 \FPtrunc{\iter}{\iter}{0}
 \multido{\N=0+1}{\iter}{ % y-Koordinate
   \psdot(\I,\N)
 }
}
\rput[c](8.5,5pt){$x_1$}
\rput[r](-5pt,5.3){$x_2$}
\rput[c](4,3.2){$s^Tx \leq 1$}
\end{pspicture}
\caption{Knapsack polytope for $s = (0.13, 0.205)$.}
\end{center}
\end{figure}
As bin packing for general $d$ is strongly $\mathbf{NP}$-hard~\cite{Johnson1992},
we are particularly interested in the complexity of bin packing if $d$ is constant.
For $d=2$ it is true that $OPT = \lceil OPT_f\rceil$ and it suffices to compute and round an optimum 
fractional solution~\cite{BinPackingD2-McCormickSmallwoodSpieksmaSODA97}. 
However, for $d\geq3$, one might have $OPT > \lceil OPT_f\rceil$. Still, \cite{BinPackingWithDminus2Bins-FilippiAgnetis05} generalized the argument of \cite{BinPackingD2-McCormickSmallwoodSpieksmaSODA97}
to find a solution with at most $d-2$ bins more than the optimum in polynomial time.

%The framework of De Loera, Hemmecke, Onn and Weismantel~\cite{NfoldIntegerProgramming-MOR2008} on Graver basis
%can find an optimum solution in time polynomial in the total size $s^Ta$, for fixed $d$\footnote{As in this case, the values $\lambda_x$ are polynomially bounded, using the insights of Eisenbrand and Shmonin~\cite{IntegerConesEisenbrandShmoninORL06} on the support of an optimum solution, one could also solve the problem with an integer linear program with a constant number of variables.}.
The best polynomial time algorithm previously known for constant $d\geq3$ is an $OPT+1$ approximation algorithm by Jansen and Solis-Oba~\cite{DBLP:conf/ipco/JansenS10}
which runs in time $2^{2^{O(d)}} \cdot (\log \Delta)^{O(1)}$.
Their algorithm uses a \emph{mixed integer linear program} and is based on the following insights: 
(1) If all items are small, say $s_i \leq \frac{1}{2d}$,
then the integrality gap is at most one\footnote{Compute a basic solution $\lambda$ to the LP and buy $\lfloor\lambda_x\rfloor$ times point $x$. Then assign the
items in the remaining instance greedily.}.
(2) If all items have constant size, then one can guess the points used in the optimum solution.
It turns out that for arbitrary instances both approaches can be combined for an $OPT+1$ algorithm. Also note that if the \emph{mixed integer roundup conjecture} holds true, then there is indeed a significantly simpler algorithm achieving the same bound, again by Jansen and Solis-Oba~\cite{DBLP:journals/ipl/JansenS11}.
However, to find an optimum solution, we cannot allow any error and a fundamentally different approach is
needed.

Note that for general $d$, the recent algorithm of Hoberg and the 2nd author 
provides solutions of cost at most $OPT + O(\log d)$~\cite{BinPacking-log-OPT-LogLog-OPT-Rothvoss-FOCS2013,DBLP:conf/soda/HobergR17}, improving
the classical Karmarkar-Karp algorithm with a guarantee of $OPT + O(\log^2 d)$~\cite{KarmarkarKarp82}.
Both algorithms run in time polynomial in $\sum_{i=1}^d a_i$ and thus count in our setting
as pseudopolynomial. In fact, those algorithms can still be cast as asymptotic
FPTAS.

Bin packing and more generally the cutting stock problem belong to a family of problems that 
consist of selecting integer points in a polytope with multiplicities. In fact, several 
scheduling problems fall into this framework as well, where the polytope describes the 
set of jobs that are admissible on a machine under various constraints.

We give some notation needed throughout the paper. For a set $X \subseteq \setR^d$, we define the spanned \emph{cone} as $\textrm{cone}(X) = \{ \sum_{x \in X} \lambda_x x \mid \lambda_x \geq 0  \; \forall x \in X \}$ and the \emph{integer cone} as
 $\textrm{int.cone}(X) := \{ \sum_{x \in X} \lambda_x x \mid \lambda_x \in \setZ_{\geq0} \; \forall x \in X \}$. 
 We will only consider rational polyhedra $P = \{ x \in \setR^d \mid Ax \leq b\}$ where both $A$
 and $b$ are rational. %Note that this is equivalent to requiring that $P$ is the
% convex hull of a finite number of points with rational coordinates.
 % Note that for any rational polytope in inequality representation
% one can multiply each inequality $A_ix \leq b_i$ to achieve integrality.
 We define $\textrm{enc}(P)$ as the number of bits that
 it takes to write down the rational numbers in $A$ and in $b$.
 For definitness one can use the definitions of Korte and Vygen~\cite{KorteVygen02}, Chapter 4.
 However, we will only need that $\textrm{enc}(P)$ is polynomially
 related to $\max\{ m,d, \log \Delta\}$ where $m$ is the number of inequalities and $\Delta$ is the largest
 nominator or denominator
 appearing in any of the inequalities definining of $P$. For the remainder of this paper we denote $\log(y) := \log_2(y)$.
 We would like to point out that any rational inequalities
 can be multiplied by a least common multiple of the denominators to obtain
 an equivalent system $\tilde{A}x \leq \tilde{b}$ with integral coefficients whose encoding length is within a
 polynomial factor of the original one.

\section{Our contributions}

In this paper, we resolve the question of whether bin packing with a fixed number of item 
types can be solved in polynomial time.
\begin{theorem} \label{thm:MainTheorem}
For any bin packing instance $(s,a)$ with $s \in [0,1]^d \cap \setQ^d$ and $a \in \setZ_{\geq0}^d$, an optimum 
integral solution can be computed in time $(\log \Delta)^{2^{O(d)}}$ 
where $\Delta := \max\{ \|a\|_{\infty},\|\beta\|_{\infty},4 \}$ where $s_i = \frac{\alpha_i}{\beta_i}$ with $\alpha_i \in \setZ_{\geq 0},\beta_i \in \setZ_{\geq 1}$. % is the largest integer appearing in a denominator of $s_i$ or in a multiplicity $a_i$. %the input. 
%with $M := \max_i\{ \log (b_i), \log (\frac{1}{s_i}) \}$. %  $M^{d^{O(1)}2^{2d}}$
\end{theorem}
This answers an open question posed by McCormick, Smallwood and Spieksma~\cite{BinPackingD2-McCormickSmallwoodSpieksmaSODA97} as well as
by Eisenbrand and Shmonin~\cite{IntegerConesEisenbrandShmoninORL06}. In fact, the first paper
even conjectured this problem to be $\mathbf{NP}$-hard for $d=3$. Moreover the polynomial
solvability for general $d$ was called a ''\emph{hard open problem}'' by Filippi~\cite{BinPackingWithFixedNumberOfWeights-Filippi07}. 

We derive Theorem~\ref{thm:MainTheorem} via the following general theorem for finding conic integer 
combinations in fixed dimension.
\begin{theorem} \label{thm:MainGeneralTheorem}
Given rational polyhedra $P, Q \subseteq \setR^d$ where $P$ is bounded, one can find a vector $y \in  \textrm{int.cone}(P \cap \setZ^d) \cap Q$ and
a vector $\lambda \in \setZ_{\geq0}^{P \cap \setZ^d}$ such that $y = \sum_{x \in P \cap \setZ^d} \lambda_x x$ in time
$\textrm{enc}(P)^{2^{O(d)}} \cdot \textrm{enc}(Q)^{O(1)}$, or decide that no such $y$ exists.
Moreover, the support of $\lambda$ is always bounded by $2^{2d+1}$. %\marginpar{If Eisenbrand-Shmonin is polynomial, then we can put $2^d$ here.}
\end{theorem}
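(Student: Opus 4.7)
My plan is to reduce Theorem~\ref{thm:MainGeneralTheorem} to an integer feasibility problem in dimension $2^{O(d)}$ and solve it via Kannan's algorithm. The reduction has two main components: a Carath\'eodory-type support bound that caps $|\supp(\lambda)|$ at $2^{2d+1}$, and a linearization that removes the bilinear products $\lambda_i x_i$ from the natural formulation.

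\textbf{Support bound.} I would first establish that any feasible $y \in \textrm{int.cone}(P \cap \setZ^d) \cap Q$ admits a representation $y = \sum_{i=1}^k \lambda_i x_i$ with distinct $x_i \in P \cap \setZ^d$, $\lambda_i \in \setZ_{>0}$, and $k \leq 2^{2d+1}$. Starting from any representation, the goal whenever $k > 2^{2d+1}$ is to exhibit an integer relation $\sum_i \mu_i x_i = 0$ with $\mu \in \setZ^k$ nonzero and not all of one sign; the shift $\lambda \mapsto \lambda + t\mu$ for a suitable integer $t$ then zeroes out one coordinate while preserving $\lambda_i \geq 0$. The desired $\mu$ should come from a pigeonhole argument in the spirit of Eisenbrand--Shmonin: partition the $x_i$ according to their sign pattern with respect to a carefully chosen family of roughly $2d$ hyperplanes, and use a Steinitz-style balanced-rearrangement step to find two subsets whose partial sums coincide modulo an appropriate sublattice, from which $\mu$ is read off. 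The precise constant $2^{2d+1}$ should fall out of combining $2^{2d}$ sign patterns with one more bit of slack for the rearrangement step.

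\textbf{Formulating the IP and handling bilinearity.} Having guessed $k \in \{0, 1, \dots, 2^{2d+1}\}$, the natural IP in variables $x_i \in \setZ^d$ and $\lambda_i \in \setZ_{\geq 0}$ would impose $A x_i \leq b$ for each $i$ (where $P = \{Ax \leq b\}$) and the bilinear constraint $\sum_i \lambda_i x_i \in Q$. Since Kannan's algorithm cannot cope with products of variables, I would pass to the linearized variables $w_i := \lambda_i x_i \in \setZ^d$: the inclusion $x_i \in P$ becomes $A w_i \leq \lambda_i b$ (linear in $(w_i,\lambda_i)$) and the target constraint becomes $\sum_i w_i \in Q$. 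What remains is the non-IP-expressible divisibility $\lambda_i \mid w_{i,j}$ for every coordinate $j$. The heart of the argument will be showing that this divisibility may be \emph{dropped}: any $(w_i,\lambda_i)$ with $w_i \in \lambda_i P \cap \setZ^d$ lies in $\textrm{int.cone}(P \cap \setZ^d)$, and reapplying the support bound to the cone over $P$ lets one replace each $w_i$ by an integer combination of at most $2^{O(d)}$ genuine integer points of $P$ without affecting $\sum_i w_i$. Aggregating these expansions and again invoking the Carath\'eodory reduction keeps the total number of distinct generators within $2^{2d+1}$, so no correctness is lost.

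\textbf{Applying Kannan.} The relaxed feasibility IP has dimension $k(d+1) = 2^{O(d)}$ and encoding length $O(\enc(P) + \enc(Q))$. Kannan's algorithm for integer feasibility in fixed dimension $n$ runs in time $n^{O(n)} \cdot L$, linear in the input size $L$, giving overall running time $\enc(P)^{2^{O(d)}} \cdot \enc(Q)^{O(1)}$ as claimed; solutions to the relaxed IP are then translated back to an explicit $\lambda$ with the desired support bound by the decomposition in the previous paragraph.

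The step I expect to be the main obstacle is the decoupling argument that justifies discarding the divisibility conditions $\lambda_i \mid w_{i,j}$ while still controlling the final support. The Carath\'eodory-type bound must be strong enough not only for the original problem but also for the ``expansion'' step, and I anticipate that matching the constant $2^{2d+1}$ on both uses will require the careful Eisenbrand--Shmonin-style accounting alluded to above.
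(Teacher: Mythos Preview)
Your linearization via $w_i := \lambda_i x_i$ is the crux of the proposal, and the step you yourself flag as the main obstacle is in fact unsound. You assert that any integer pair $(w_i,\lambda_i)$ with $w_i \in \lambda_i P \cap \setZ^d$ has $w_i \in \textrm{int.cone}(P \cap \setZ^d)$. This is precisely the \emph{integer decomposition property} (IDP), and general polytopes do not enjoy it. A standard counterexample is the Reeve tetrahedron $P = \conv{(0,0,0),(1,0,0),(0,1,0),(1,1,2)} \subseteq \setR^3$: here $P \cap \setZ^3$ consists only of the four vertices, so $(1,1,1) \notin \textrm{int.cone}(P\cap\setZ^3)$, yet $(1,1,1) \in 2P \cap \setZ^3$. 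With $Q = \{(1,1,1)\}$ your relaxed IP is feasible (take $k=1$, $\lambda_1=2$, $w_1=(1,1,1)$), while the correct answer is that no $y$ exists. Thus dropping the divisibility constraints yields false positives, and the relaxed IP does not decide the problem. No amount of post-hoc ``expansion and re-Carath\'eodory'' can repair this, because there is nothing to expand: the IP has certified a point that is simply not in the integer cone.

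The paper's proof avoids bilinearity by a structural, not algebraic, device. It first proves a \emph{Structure Theorem}: one can precompute, in time $\enc(P)^{O(1)}$ and independently of $Q$, a set $X \subseteq P\cap\setZ^d$ of polynomial size (the vertices of a covering of $P\cap\setZ^d$ by integral parallelepipeds) such that every $y\in\textrm{int.cone}(P\cap\setZ^d)$ admits a representation with $|\supp(\lambda)\cap X|\le 2^{2d}$, $|\supp(\lambda)\setminus X|\le 2^{2d}$, and $\lambda_x\in\{0,1\}$ for all $x\notin X$. One then \emph{guesses} the subset $X'=\supp(\lambda)\cap X$ (at cost $|X|^{2^{2d}}$) and the number $k\le 2^{2d}$ of extra points. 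The resulting IP has variables $\lambda_x$ for the \emph{known} points $x\in X'$ (so $\sum_{x\in X'}\lambda_x x$ is linear in $\lambda$) and variables $x_1,\dots,x_k\in P\cap\setZ^d$ for the extra points, each with \emph{known} weight $1$ (so $\sum_i x_i$ is linear in the $x_i$). The separation into ``known point, unknown weight'' and ``unknown point, unit weight'' is exactly what eliminates the products $\lambda_i x_i$, and this is the idea your proposal is missing.
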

In fact, by choosing $P = \{ {x \choose 1} \in \setR^{d+1}_{\geq 0} \mid s^Tx \leq 1 \}$ and $Q = \{ a\} \times [0,b]$,
we can decide in polynomial time, whether $b$ bins suffice.
Theorem~\ref{thm:MainTheorem} then follows using binary search.
While for the proof strategy it will be convinient that $P$ is bounded, it turns out that one can reduce
the case where $P$ is an unbounded rational polyhedron to Theorem~\ref{thm:MainGeneralTheorem}. However, we postpone that reduction
to Section~\ref{sec:IntConicComForUnboundedPolyhedra}. 

%We assume that $P$ is a bounded polytope; this restriction is not necessary for $Q$.
%In the language of bin packing, 
Our main insight to prove Theorem~\ref{thm:MainGeneralTheorem} lies in the following 
structure theorem which says that, for fixed $d$, 
there is a pre-computable polynomial size set $X \subseteq P \cap \setZ^d$ of special vectors
that are \emph{independent} of the target polytope $Q$ % the multiplicity vector $Q$ 
with the property that, for any $y \in \textrm{int.cone}(P \cap \setZ^d) \cap Q$, there is always a 
conic integer combination that has all
but a constant amount of weight on a constant number of vectors in $X$.

\begin{theorem}[Structure Theorem] \label{thm:StructureTheorem}
  Let $P = \{ x \in \setR^d \mid Ax \leq b\}$ be a polytope with $A \in \setZ^{m \times d}, b \in \setZ^m$ and
  set $\Delta := \max\{ \|A\|_{\infty}, \|b\|_{\infty}, 2\}$. %such that all 
%coefficients are bounded by $\Delta$ in absolute value. % with $\Delta := \max\{ \|A\|_{\infty},\|b\|_{\infty}\} \geq 3$.
Then there exists a set $X \subseteq P \cap \setZ^d$ of size $|X| \leq N := m^d d^{O(d)} (\log \Delta)^{d}$ that can 
be computed in time $N^{O(1)}$  % $m^{O(d)} \cdot d^{O(d^2)} \cdot (\log \Delta)^{O(d^2)}$
with the following property: For any vector $a \in \textrm{int.cone}( P \cap \setZ^d )$ there exists an 
integral vector $\lambda \in \setZ_{\geq0}^{P \cap \setZ^d}$ such that $\sum_{x \in P \cap \setZ^d} \lambda_x x = a$ and
\begin{enumerate}
\item[(1)] $\lambda_x \in \{ 0,1\} \; \;\forall x \in (P \cap \setZ^d)\backslash X$ 
\item[(2)] $|\supp(\lambda) \cap X| \leq 2^{2d}$  
\item[(3)] $|\supp(\lambda) \backslash X| \leq 2^{2d}$.
\end{enumerate}
\end{theorem}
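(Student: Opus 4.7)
My plan is to combine a sparse integer Carath\'eodory-type bound with a set $X$ that approximates integer points of $P$ at all dyadic scales near every vertex. By an Eisenbrand--Shmonin-style argument (see \cite{IntegerConesEisenbrandShmoninORL06}), every $a \in \textrm{int.cone}(P \cap \setZ^d)$ already admits \emph{some} decomposition with support at most $2^{O(d)}$; the catch is that the vectors appearing in this sparse decomposition are arbitrary elements of $P \cap \setZ^d$, whereas the theorem demands that every large-multiplicity vector lie in a pre-computed polynomial-size set. The strategy is therefore to swap each heavy vector of this initial decomposition for a surrogate in $X$ and pay for the resulting integer residual with a constant number of additional weight-one vectors.

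To construct $X$, I would exploit the vertex structure of $P$. Since $P$ has at most $\binom{m}{d} \leq m^d$ vertices, I would iterate over each vertex $v$ together with each ``dyadic profile'' $(k_1, \ldots, k_d) \in \{0, 1, \ldots, \lceil \log \Delta \rceil\}^d$, and include in $X$ a representative integer point of $P$ at scale roughly $(2^{k_1}, \ldots, 2^{k_d})$ along the $d$ edge-directions at $v$. This gives $|X| \leq m^d \cdot d^{O(d)} \cdot (\log \Delta)^d$. The hierarchy should ensure that for every integer point $y \in P$ near some $v$ and every multiplicity $\mu$, one can find $x^* \in X$ and $\mu^* \in \setZ_{\geq 0}$ with $\mu y - \mu^* x^*$ an integer vector whose coordinates are bounded independently of $\mu$.

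Given the initial decomposition $a = \sum_{i=1}^K \mu_i y_i$ with $K \leq 2^{O(d)}$, I would process each term with $\mu_i \geq 2$: locate the vertex $v$ nearest $y_i$, pick $x_i^* \in X$ at dyadic scale approximately $\mu_i$ near $v$, and choose $\mu_i^*$ so that $r_i := \mu_i y_i - \mu_i^* x_i^*$ has $\|r_i\|_\infty$ polynomially bounded in $d$ and $\log \Delta$. The total residual $r := \sum_i r_i$ is then a bounded integer vector. Arguing (from the construction of the replacements) that $r$ still lies in $\textrm{int.cone}(P \cap \setZ^d)$, one more application of the sparse Carath\'eodory bound writes $r$ as a sum of at most $2^{O(d)}$ \emph{distinct} integer points of $P$, each at multiplicity one. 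Combined with the heavy terms from $X$, this yields a decomposition satisfying (1)--(3).

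The main obstacle is engineering $X$ so that the replacement step always produces a surrogate with bounded residual \emph{and} keeps the accumulated residual inside the integer cone generated by $P \cap \setZ^d$, while keeping $|X|$ polynomial. I expect the correct granularity to come from a binary expansion $\mu_i = \sum_k b_k 2^k$: at each bit position one uses a representative of $P \cap \setZ^d$ from the $2^k$-scaled neighborhood of the relevant vertex, which is precisely what forces the $(\log \Delta)^d$ factor. Showing that these dyadic approximations compose to a valid replacement, and that the cumulative error remains in the integer cone so that Carath\'eodory can be reapplied, is the most delicate step of the argument.
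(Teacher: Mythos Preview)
Your opening move---first apply Eisenbrand--Shmonin to obtain a decomposition $a=\sum_{i=1}^K \mu_i y_i$ with $K\le 2^d$, then try to swap each heavy $y_i$ for a point in a precomputed set $X$---is exactly how the paper begins. The gap is in the swap itself. You assert that for each $(\mu_i,y_i)$ one can find $x_i^*\in X$ and $\mu_i^*\in\setZ_{\ge 0}$ with $\|\mu_i y_i-\mu_i^* x_i^*\|_\infty$ bounded independently of $\mu_i$, but nothing in your dyadic-profile construction delivers this. Two integer points $y_i$ and $x_i^*$ that lie in the same ``dyadic cell'' near a vertex $v$ may still differ by a nonzero integer vector, and then $\mu_i(y_i-x_i^*)$ blows up with $\mu_i$; choosing a different $\mu_i^*$ does not help unless $y_i$ and $x_i^*$ are rationally proportional, which you have no reason to expect. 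Your later remark about binary-expanding $\mu_i$ conflates the scale of the \emph{multiplicity} with the scale of the \emph{point inside $P$}; writing $\mu_i=\sum_k b_k 2^k$ gives $2^k y_i$, which lives in $2^kP$, not in $P$, so it is not approximated by any element of $X$. Finally, even granting a bounded residual $r$, you still need $r\in\textrm{int.cone}(P\cap\setZ^d)$ in order to re-apply Carath\'eodory; you flag this as delicate but give no mechanism for it.

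The paper closes exactly this gap with a structural device you are missing: it covers $P\cap\setZ^d$ by $N=m^d d^{O(d)}(\log\Delta)^d$ many \emph{integral parallelepipeds} $\Pi\subseteq P$, and takes $X$ to be the set of their vertices. The parallelepiped geometry is what makes the redistribution go through: if $y\in\Pi$ has weight $\mu\ge 2$, let $z$ be the nearest vertex of $\Pi$ and observe that the mirrored point $2y-z$ is again an integer point of $\Pi$; shifting two units of weight from $y$ to one unit each on $z$ and $2y-z$ preserves $\sum\lambda_x x$, keeps everything inside $P\cap\setZ^d$, and strictly decreases the non-vertex weight. Iterating (together with one more parity argument) leaves at most $2^d$ non-vertex points per parallelepiped, each with weight $1$, and the residual never leaves $\textrm{int.cone}(P\cap\setZ^d)$ because every intermediate point is in $P\cap\setZ^d$ by construction. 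Your dyadic intuition does resurface in the paper, but only as the first step: the cells (defined by multiplicative buckets of the slacks $b_i-A_ix$) are what keep the parallelepipeds inside $P$; the actual weight transfer relies on the parallelepiped symmetry, not on approximation.
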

With this structure theorem one can obtain Theorem~\ref{thm:MainGeneralTheorem} simply by 
computing $X$, guessing $\supp(\lambda) \cap X$ and finding the corresponding values
of $\lambda$ and the vectors in $\supp(\lambda) \backslash X$ with an integer program with a constant number of variables.

Bin packing can also be considered as a scheduling problem where the processing times
correspond to the item sizes and the number of machines should be minimized, given
a bound on the makespan. A variety of scheduling
problems in the so-called high multiplicity setting can also be tackled using Theorem~\ref{thm:MainGeneralTheorem}. 
Some of these scheduling applications are described in Section~\ref{sec:HighMultScheduling}. 
For example we can solve in polynomial time the high multiplicity variant of minimizing the
makespan for unrelated machines  with machine-dependent release dates for a fixed number of
job types and machine types. For an overview over the vast literature in high multiplicity scheduling
we refer to the article
of McCormick, Smallwood and Spieksma~\cite{DBLP:journals/mor/McCormickSS01} as well as the one by
Hochbaum and Shamir~\cite{HighMultiplicityScheduling-HochbaumShamir1991}.

%$Q \mid p_{ij},r_{ij},M_J \mid C_{\max}$
%\marginpar{Add ref to Posner paper}
%Theorem~\ref{thm:MainGeneralTheorem}
%immediately implies Theorem~\ref{thm:MainGeneralTheorem}.

%\begin{theorem}[Structure Theorem] \label{thm:StructureTheorem}
%For any Knapsack define by weights $(s_1,\ldots,s_d)$, there exists set 
%of special patterns $S \subseteq \{ x \in \setZ_{\geq0}^d \mid s^Tx \leq 1\}$ of size $O(\log \frac{2}{s_{\min}})^d$
%that can be computed in time $O(\log \frac{2}{s_{\min}})^d$, $s_{\min} := \min_{i=1,\ldots,n}\{ s_i \}$
%such that: for every multiplicity vector $b$, there exists an optimum integral solution $\lambda^*$
%with
%\[
%  |\supp(\lambda^*)| \leq 2^{2d+1} \textrm{  and  }  \sum_{x \in \pazocal{P} \backslash S} \lambda^*_x \leq 2^{2d}.
%\]
%\end{theorem}
%This structure theorem allows us to express bin packing as integer linear program with
%just a constant number of variables which then can be solved via Lenstra's algorithm.

\section{Preliminaries}

In this section we are going to review some known tools that we are 
going to use in our algorithm. The first one is Lenstra's well known
algorithm for integer progamming, that runs in polynomial time as long 
as $d$ is fixed\footnote{Here, the original dependence of \cite{ILPinFixedDimLenstra1981} was $2^{O(d^3)}$ 
which was then improved by Kannan~\cite{IntegerProgramming-Kannan-MOR1987}
to $d^{O(d)}$.}.
\begin{theorem}[Lenstra~\cite{ILPinFixedDimLenstra1981}, Kannan~\cite{IntegerProgramming-Kannan-MOR1987}\label{thm:ILPinFixedDim}]
Given $A \in \setZ^{m \times d}$ and $b \in \setZ^m$ with $\Delta := \max\{ \|A\|_{\infty}, \|b\|_{\infty},2 \}$. Then one
can find an $x \in \setZ^d$ with  $Ax \leq b$ (or decide that none exists) 
in time $d^{O(d)} \cdot m^{O(1)} \cdot (\log \Delta)^{O(1)}$.
\end{theorem}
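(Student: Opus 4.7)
The plan is to follow the Lenstra–Kannan divide-and-conquer strategy, which recursively reduces integer feasibility on the polytope $P := \{x \in \setR^d \mid Ax \leq b\}$ to a controlled number of $(d-1)$-dimensional subproblems. First, I would test LP feasibility via Khachiyan's algorithm; if $P = \emptyset$, report infeasibility. Otherwise I would reduce to the full-dimensional case by detecting implicit equalities using LP duality and working in the affine hull of $P$, which enlarges $m$ and $\log \Delta$ only polynomially.

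The next ingredient is a \emph{rounding} step. Using the shallow-cut ellipsoid method together with LLL basis reduction, I would compute in time $\text{poly}(m, d, \log \Delta)$ a unimodular matrix $U \in \text{GL}_d(\setZ)$ and a translation $t \in \setQ^d$ such that $P' := UP + t$ is sandwiched between concentric balls $B(0,r) \subseteq P' \subseteq B(0,R)$ with $R/r \leq d^{O(1)}$. Since $U$ is unimodular, $P \cap \setZ^d \neq \emptyset$ iff $P' \cap \setZ^d \neq \emptyset$, so we may replace $P$ by $P'$, whose encoding size is only polynomially larger than that of $P$.

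Now I would invoke the \emph{flatness theorem} (Khinchine): there is a universal bound $f(d) = d^{O(1)}$ such that every lattice-free convex body has lattice width at most $f(d)$. If the inner radius $r$ of $P'$ is at least $f(d) \cdot d^{O(1)}$, then $P'$ is guaranteed to contain a lattice point, and one such point can be produced explicitly by greedy rounding thanks to the well-roundedness of $P'$. Otherwise the lattice width of $P'$ is at most $f(d)$, and I would find a primitive direction $v \in \setZ^d \setminus \{0\}$ realizing a small width $w(v) := \max_{x \in P'} v^T x - \min_{x \in P'} v^T x$. A single LLL call on the appropriate dual lattice already gives $w(v) \leq 2^{O(d)} \cdot f(d)$; Kannan's refinement (recursive enumeration of short vectors in a projected sublattice) sharpens this to $w(v) \leq d^{O(1)}$. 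For each of the $d^{O(1)}$ integers $k \in [\min v^T P', \max v^T P']$, substitute $v^T x = k$ into the system $Ax \leq b$ to obtain a $(d-1)$-dimensional rational polytope with at most $m+2$ facets and coefficient bit-length at most additively larger, and recurse.

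Letting $T(d, m, L)$ denote the running time with $L := \log \Delta$, the recurrence reads
\[
T(d, m, L) \leq d^{O(1)} \cdot T\big(d-1,\, m+2,\, O(L) + \text{poly}(d)\big) + \text{poly}(m, d, L),
\]
which unrolls to $T(d, m, L) \leq d^{O(d)} \cdot m^{O(1)} \cdot L^{O(1)}$, matching the claim. The main obstacle is Kannan's sharpening of the per-level branching from $2^{O(d)}$ slices down to $d^{O(1)}$: without it the recursion yields only $2^{O(d^2)}$ (and Lenstra's original unbalanced search gives $2^{O(d^3)}$), whereas the refined enumeration of near-shortest lattice vectors is precisely what produces the $d^{O(d)}$ leading factor. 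A secondary technicality is arranging the substitution after each recursion so that the coefficient bit-length grows only additively rather than multiplicatively, so that the $L^{O(1)}$ dependence survives all $d$ recursive levels.
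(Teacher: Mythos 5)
The paper does not prove this theorem: it is stated as a black box and cited directly from Lenstra's and Kannan's papers, so there is no "paper's own proof" to compare against. Your sketch is a reasonable high-level account of the Lenstra--Kannan algorithm: full-dimensionalization, ellipsoidal rounding combined with LLL to well-round the body under a unimodular map, the flatness theorem to either exhibit a lattice point or find a thin direction, and recursion on hyperplane slices. You also correctly locate the source of the $d^{O(d)}$ bound (as opposed to Lenstra's $2^{O(d^3)}$) in Kannan's sharper per-level branching, which the paper's own footnote attributes to Kannan as well.

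Two small cautions on the details, which do not affect the overall conclusion. First, the claim that a single LLL call already yields width $2^{O(d)} f(d)$ and that Kannan "sharpens this to $d^{O(1)}$" compresses several steps: Kannan's gain comes from replacing LLL's $2^{O(d)}$ approximation of the shortest dual-lattice vector by an essentially exact computation (via Korkine--Zolotarev--type reduction and enumeration), and that subroutine itself costs $d^{O(d)}$, so the accounting is not a clean recurrence with only a $\text{poly}(m,d,L)$ additive term; one has to argue that this $d^{O(d)}$ work does not compound multiplicatively across the $d$ levels. Second, the bit-length control after substitution (your "secondary technicality") genuinely requires care --- the naive elimination can square denominators, and the standard fix is to keep working with the original constraint matrix restricted to a lattice hyperplane rather than re-deriving an inequality system at each level. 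Neither point is a gap in the sense of a wrong idea; they are exactly the places where Lenstra's and Kannan's papers do the real work, and your sketch correctly identifies where those difficulties lie.
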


For a polytope $P \subseteq \setR^d$, the \emph{integral hull} is the convex hull of the
integral points, abbreviated with $P_I := \conv{P \cap \setZ^d}$ and the \emph{extreme points} of $P$ (also called \emph{vertices}) are
denoted by $\vertices(P)$.
 If we consider a low dimensional polytope $P$, then $P$ can indeed contain
 an exponential number of integral points --- but only few of those can be
 extreme points of $P_I$. 
 \begin{theorem}[Cook et al.~\cite{IntegerPointsInPolyhedra-CookHartmannKannanMcDiarmid-Combinatorica92, ComplexityOfIntegerHull-TechReport-Hartmann1988}] \label{thm:NumberOfExtremePoints} % Cook, Hartmann, Kannan, McDiarmid
 Consider any polytope $P = \{ x \in \setR^d \mid Ax \leq b \}$ with  $A \in \setZ^{m \times d}$, $b \in \setZ^m$  and $\Delta := \max\{ \|A\|_{\infty},\|b\|_{\infty},2\}$. 
 Then $P_I = \conv{ P \cap \setZ^d }$ has at most $m^d \cdot (O(\log \Delta))^{d}$ many extreme points.
 In fact a list of extreme points can be computed in time $d^{O(d)} (m\cdot \log(\Delta))^{O(d)}$.
\end{theorem}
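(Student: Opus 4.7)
The plan is to associate each vertex of $P_I$ with a combinatorial pair $(S,\kappa)$, where $S$ is a $d$-subset of rows of $A$ and $\kappa$ is a small ``height'' vector, then count such pairs and recover each vertex algorithmically via Lenstra's integer programming routine (Theorem~\ref{thm:ILPinFixedDim}). Fix $v \in \vertices(P_I)$ and pick $c$ in the relative interior of the normal cone of $P_I$ at $v$, so that $v$ is the unique integer maximizer of $c^T x$ over $P\cap\setZ^d$. The LP $\max\{c^Tx:x\in P\}$ is attained at some vertex $w$ of $P$, hence $c$ lies in the normal cone $N_P(w)$. Triangulating the common refinement of the normal cones $\{N_P(w):w\in\vertices(P)\}$ into simplicial pieces spanned by linearly independent $d$-subsets of the facet normals $\{a_{i_1},\ldots,a_{i_d}\}$, one sees there are at most $\binom{m}{d}$ such pieces in total; associate $v$ with the index set $S=\{i_1,\ldots,i_d\}$ of the piece containing $c$. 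Note $w$ is recovered from $S$ by solving the linear system $a_{i_j}^Tw=b_{i_j}$.

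Next I would bound the number of vertices of $P_I$ associated with a fixed $S$. Let $e_1,\ldots,e_d$ be the edge directions of $P$ at $w$ (pointing into $P$) that are dual to $\{a_{i_j}\}$, i.e.\ $a_{i_k}^Te_j=0$ for $k\neq j$. Since $c\in\cone(a_{i_1},\ldots,a_{i_d})$, any $v\in P$ can be written as $v=w+\sum_{j=1}^d \mu_j e_j$ with $\mu_j\geq 0$, and membership in $P$ forces each $\mu_j$ to be bounded by a polynomial in $\Delta$ (via Cramer's rule applied to the defining system of $P$). The technical heart of the argument is the claim that the integer tuples $(\mu_1,\ldots,\mu_d)$ for which $w+\sum_j\mu_j e_j$ is a vertex of $P_I$ form a Pareto staircase of size $O(\log\Delta)^d$: along each coordinate axis, the heights at which a new lattice vertex of a two-dimensional section can appear grow at least geometrically, by the classical continued-fractions / Andrews bound (successive convergents of a rational with numerator and denominator at most $\Delta$ roughly double), yielding $O(\log\Delta)$ heights per axis. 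An induction on $d$ then lifts this to $O(\log\Delta)^d$ tuples in total.

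Multiplying, the number of vertices of $P_I$ is at most $\binom{m}{d}\cdot(O(\log\Delta))^d=m^d(O(\log\Delta))^d$, as claimed. For the algorithmic statement, I would enumerate all $\binom{m}{d}$ index sets $S$, compute $w$ and the edge directions $e_j$ by linear algebra, and then enumerate the $O(\log\Delta)^d$ candidate tuples $(\mu_1,\ldots,\mu_d)$ on the discretized grid of possible ``staircase heights.'' Each candidate $v$ is tested for being a genuine vertex of $P_I$ by invoking Theorem~\ref{thm:ILPinFixedDim} to decide whether some strictly dominating integer point exists in $\bigl(w+\cone(e_1,\ldots,e_d)\bigr)\cap P$; this runs in time $d^{O(d)}(\log\Delta)^{O(1)}$. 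The total running time is $d^{O(d)}(m\log\Delta)^{O(d)}$.

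The main obstacle is the Pareto staircase bound in the second paragraph: showing that only $O(\log\Delta)^d$ integer tuples correspond to genuine vertices. In dimension $2$ this is the clean continued-fractions argument of Andrews; in higher dimensions it requires a more delicate induction, probably by projecting the staircase onto a $(d-1)$-dimensional face and arguing that each ``layer'' contributes only $O(\log\Delta)$ additional vertices, using that the lattice determinants indexing layer transitions grow geometrically. Everything else reduces to standard LP vertex enumeration and applications of Lenstra's algorithm.
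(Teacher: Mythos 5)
The paper does not contain a proof of this theorem; it is stated as a cited result of Cook, Hartmann, Kannan, and McDiarmid (and Hartmann's thesis) and used as a black box, so there is no internal argument to compare against.

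Evaluated on its own, your outline captures the correct high-level structure of the known proof: triangulate the normal fan of $P$ into $O(m^d)$ simplicial cones spanned by $d$-subsets of facet normals, then for each simplicial cone bound the number of vertices of $P_I$ whose maximizing objectives land in it, and multiply. The normal-fan bookkeeping, the dual edge basis $\{e_j\}$ at a vertex $w$, and the Cramer's-rule bound of $\Delta^{O(d)}$ on the coefficients $\mu_j$ are all fine. But the actual content of the theorem---what reduces $\Delta^{O(d^2)}$ candidate lattice points in a corner cone to only polylogarithmically many integer-hull vertices---is exactly the staircase lemma that you yourself flag as ``the main obstacle,'' and the proposal does not prove it. The $d=2$ Andrews/continued-fractions bound is correct, but the higher-dimensional lift is not a coordinate-wise product: the integer-hull vertices in a $d$-dimensional corner form an antichain whose points do not organize into per-axis layers, and the assertion that ``each layer contributes $O(\log\Delta)$'' (prefaced by ``probably'') is precisely the delicate induction one needs and has not been supplied. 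In Cook--Hartmann--Kannan--McDiarmid the bound per simplicial cone is in fact $(O(\log\Delta))^{d-1}$, obtained by a recursive decomposition of the cone that roughly halves a controlling lattice determinant at each step and terminates after $O(d\log\Delta)$ rounds; this is the missing ingredient. The algorithmic claim inherits the same gap, since the candidate enumeration step presupposes a constructive version of that decomposition. As written, this is a correct outline of the known approach, not a complete proof.
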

This bound is essentially tight. B\'ar\'any, Howe and Lov\'asz~\cite{LowerBoundVerticesOfPI-BaranyHoweLovasz1992} found a family of polytopes $P \subseteq \setR^d$ (simplices, in fact)
such that $P_I$ has $\Omega(\varphi^{d-1})$ many extreme points where $\varphi$ is the encoding length of $P$. 
% % See \footnote{See \url{http://ecommons.library.cornell.edu/handle/1813/8702}}
A simple fact that we use frequently throughout the paper is the following: 
\begin{lemma} \label{lem:InfinityNormOfVertex}
  Let $P = \{ x \in \setR^d \mid Ax \leq b\}$ be a rational polyhedron where $A \in \setZ^{m \times d}$ and $b \in \setZ^m$ and set
  $\Delta := \max\{ \|A\|_{\infty},\|b\|_{\infty}\}$. Then any $x \in \vertices(P)$ has $\|x\|_{\infty} \leq M$ where $M := d! \cdot \Delta^d$. Moreover, if $P$ is bounded then  $P \subseteq [-M,M]^d$.
\end{lemma}
\begin{proof}  
  By Cramer's rule, the coordinates of a vertex $x$ of $P$ are of the form $\det(Q)/\det(R)$
    where $Q$ and $R$ are $d \times d$ matrices filled with entries from $\{ -\Delta,\ldots,+\Delta\}$.
    Then $\|x\|_{\infty} \leq |\det(Q)| \leq d! \Delta^d = M$ as one can see from Laplace formula.
      The ``moreover'' part follows from the observation that the $\| \cdot \|_{\infty}$-norm is maximized at a vertex
      if $P$ is bounded.
\end{proof}

We will later refer to the coefficients $\lambda_x$ as the \emph{weight} given to $x$.
For a vector $a \in \cone(X)$ we know by \emph{Carath{\'e}odory's Theorem} that there is always a 
corresponding vector $\lambda \geq \bm{0}$ with at most $d$ non-zero entries and $a = \sum_{x \in X} \lambda_x x$.
One may wonder how many points $x$ are
actually needed to generate some point in the integer cone.  
%While the integral case is definitely more complex, 
In fact, at least under the additional assumption that $X$ is the set of integral
points in a convex set, %is closed under convex combinations, 
one can show that $2^d$ points suffice\footnote{For arbitrary $X \subseteq \setZ^d$, one can show that a support of at most $O(d \log (dM))$ suffices, where $M$ is the largest coefficient in a vector in $X$~\cite{IntegerConesEisenbrandShmoninORL06}.}.
The arguments are crucial for our proofs, so we replicate the proof
of \cite{IntegerConesEisenbrandShmoninORL06} to be selfcontained.
%the set of integral points a convex set, then $b$ can always be 
%combined using only $2^d$ points. 

\begin{lemma}[Eisenbrand and Shmonin~\cite{IntegerConesEisenbrandShmoninORL06}] \label{lem:ExistenceOfIntegralSolWithSupport2d}
For any polytope $P \subseteq \setR^d$ and any integral vector $\lambda \in \setZ_{\geq0}^{P \cap \setZ^d}$ there exists 
a  $\mu \in \setZ_{\geq0}^{P \cap \setZ^d}$ such that $|\supp(\mu)| \leq 2^d$ and $\sum_{x} \mu_xx = \sum_{x} \lambda_x x$.
%Moreover $\supp(\mu) \subseteq \textrm{conv}( \supp(\lambda))$.
\end{lemma}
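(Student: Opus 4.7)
The plan is to start with $\mu := \lambda$ and iteratively perform local modifications that preserve the target sum $\sum_w \mu_w w$ and the containment $\supp(\mu) \subseteq \conv(\supp(\lambda))$, until $|\supp(\mu)| \leq 2^d$. The key operation is the following: if $|\supp(\mu)| > 2^d$, then by pigeonhole on parity classes in $\{0,1\}^d$ there must exist two distinct points $x, y \in \supp(\mu)$ with $x \equiv y \pmod{2}$. Setting $z := \frac{x+y}{2}$, one has $z \in \setZ^d$ by the parity condition and $z \in \conv(\{x,y\}) \subseteq \conv(\supp(\lambda))$ by the invariant, so $z$ is a legitimate location for mass. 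I would then replace $\mu$ by $\mu - e_x - e_y + 2 e_z$.

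Next, I would check that the operation preserves the required invariants: non-negativity is immediate from $\mu_x, \mu_y \geq 1$; integrality is clear; the weighted sum is preserved since $-x-y+2z=0$; and $\supp(\mu) \subseteq \conv(\supp(\lambda))$ is inductively maintained because the only newly-added point $z$ is a midpoint of two existing support points. Once the iteration halts, the support contains at most one point per parity class, giving $|\supp(\mu)| \leq 2^d$ as desired.

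The main obstacle is proving termination, since a naive ``minimum-support'' argument fails: when $z$ is not already in $\supp(\mu)$, the single swap may actually \emph{increase} the support by one, so one cannot directly contradict minimality of support. The remedy I would use is a strictly-decreasing potential. Taking $\Phi(\mu) := \sum_w \mu_w \|w\|^2$, the short identity $-\|x\|^2 - \|y\|^2 + 2 \|\tfrac{x+y}{2}\|^2 = -\tfrac{1}{2}\|x-y\|^2$ shows that each iteration decreases $\Phi$ by exactly $\tfrac{1}{2}\|x-y\|^2 > 0$. Since $\Phi(\mu) \geq 0$, and each decrement is bounded below by $\min\bigl\{\tfrac{1}{2}\|u-v\|^2 : u \neq v \in \conv(\supp(\lambda)) \cap \setZ^d\bigr\}$, a strictly positive minimum over a finite set (because $\conv(\supp(\lambda))$ is bounded), the iteration must halt after finitely many steps.
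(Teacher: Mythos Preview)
Your proof is correct and follows essentially the same approach as the paper: both use the parity pigeonhole argument to find $x,y$ with integral midpoint $z=\tfrac{x+y}{2}$, together with a strictly convex potential to guarantee the process ends. The paper phrases this as a minimality argument (take $\mu$ minimizing $\sum_x \mu_x f(x)$ for a strictly convex $f$, then derive a contradiction), whereas you phrase it as an iterative descent with the specific choice $f(w)=\|w\|^2$ and an explicit lower bound on each decrement; these are equivalent formulations of the same idea.
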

\begin{proof}
For the sake of simplicity we can replace the original $P$ with $P := \conv{ x \mid \lambda_x > 0}$
without changing the claim.
Let $f : \setR^d \to \setR$ be any strictly convex function, i.e. in particular 
we will use that 
\[
f(\tfrac{1}{2}x+\tfrac{1}{2}y) < \frac{1}{2}(f(x) + f(y)).
\] 
For example  $f(x) = \|x\|_2^2$ does the job.
%We abbreviate $Q := \conv\{  x\in P \cap \setZ^d \mid \lambda_x > 0\}$ as the convex hull of patterns used by $\lambda$.
Let $(\mu_x)_{x \in P \cap \setZ^d}$ be an integral vector with  $\sum_{x \in P \cap \setZ^d} \lambda_x x = \sum_{x \in P \cap \setZ^d} \mu_x x$ that
minimizes the potential function $\sum_{x \in P \cap \setZ^d} \mu_x \cdot f(x)$ (note that there is at least one such solution, namely $\lambda$). 
In other words, we somewhat prefer points
that are more in the ``center'' of the polytope. We claim that indeed $|\supp(\mu)| \leq 2^d$.

For the sake of contradiction suppose that $|\supp(\mu)| > 2^d$. Then there must be two points
$x,y$ with $\mu_x>0$ and $\mu_y>0$ that have the same \emph{parity}, meaning that $x_i \equiv y_i \mod 2$ for all $i=1,\ldots,d$.
Then $z := \frac{1}{2}(x + y)$ is an integral vector and $z \in P$. Now we remove one unit of weight
from both $x$ and $y$ and add 2 units to $z$. 
\begin{center}
\begin{pspicture}(0,-0.2)(4,2.2)
\def\mydotsize{2.5pt}
\cnode*(0,0){\mydotsize}{x00} % x
\cnode*(1,0){\mydotsize}{x10}
\cnode*(2,0){\mydotsize}{x20}
\cnode*(3,0){\mydotsize}{x30}
\cnode*(4,0){\mydotsize}{x40}
\cnode*(0,1){\mydotsize}{x01}
\cnode*(1,1){\mydotsize}{x11}
\cnode*(2,1){\mydotsize}{x21} % z
\cnode*(3,1){\mydotsize}{x31}
\cnode*(4,1){\mydotsize}{x41}
\cnode*(0,2){\mydotsize}{x02}
\cnode*(1,2){\mydotsize}{x12}
\cnode*(2,2){\mydotsize}{x22}
\cnode*(3,2){\mydotsize}{x32}
\cnode*(4,2){\mydotsize}{x42} 
\nput[labelsep=2pt]{90}{x02}{$x$}
\nput[labelsep=2pt]{90}{x21}{$z$} %=\frac{1}{2}(x+y)$}
\nput[labelsep=2pt]{90}{x40}{$y$}
\ncline[linestyle=dashed]{x02}{x40}
\nput[labelsep=2pt]{-90}{x02}{$-1$}
\nput[labelsep=2pt]{-90}{x21}{$+2$}
\nput[labelsep=2pt]{-90}{x40}{$-1$}

\end{pspicture}
\end{center}
This gives us another feasible vector $\mu'$. % with $\bm{1}^T\mu' = \bm{1}^T\mu$.
But the change in the potential function is $+2f(z) - f(x) - f(y) < 0$ by strict convexity of $f$, 
contradicting the minimality of $\mu$. 
\end{proof}

In fact, the bound is tight up to a constant factor. As it seems that this has not been observed in 
the literature before, we describe a construction in Section~\ref{sec:EisenbrandShmoninIsTight} 
where a support of size $2^{d-1}$ is actually needed. 

A family of versatile and well-behaved polytopes is those of  \emph{parallelepipeds}.
Recall that
\[
  \parallelepiped = \left\{ v_0 + \sum_{i=1}^k \mu_iv_i \mid -1 \leq \mu_i \leq 1 \; \forall i=1,\ldots,k \right\}
\]
is a \emph{parallelepiped} with \emph{center} $v_0 \in \setR^d$ and \emph{directions}
$v_1,\ldots,v_{k} \in \setR^d$. Usually one requires that the directions are linearly
independent, that means $k \leq d$ and $\parallelepiped$ is $k$-dimensional.
We say that the parallelepiped is \emph{integral} if all its $2^{k}$ 
many vertices are integral. Here is an example of an integral parallelepiped with $d=2$ and $k=2$: 
%\begin{figure}
  \begin{center}
  \begin{pspicture}(0,-0.2)(3,2.2)
    \pspolygon[fillstyle=solid,fillcolor=lightgray](0,0)(2,0)(3,2)(1,2)
    \multido{\N=0+1}{4}{%
      \multido{\n=0+1}{3}{%
      \cnode*(\N,\n){2.5pt}{v\N\n}%
    }}
  \cnode[fillcolor=white,fillstyle=solid](1.5,1){3pt}{v0} \nput[labelsep=2pt]{-90}{v0}{$v_0$}
  \pnode(2.5,1){v1}
  \pnode(2,2){v2}
  \ncline[linewidth=1.0pt,arrowsize=5pt]{->}{v0}{v1} \naput[labelsep=2pt,npos=0.7]{$v_1$}
  \ncline[linewidth=1.0pt,arrowsize=5pt]{->}{v0}{v2} \naput[labelsep=1pt]{$v_2$}
  \rput[c](0.7,0.5){$\pi$}
 %   \psdots(0,0)(1,0)(2,0)(3,0)(0,1)(1,1)(2,1)(3,1)
  \end{pspicture} 
 \end{center}
Note that it is not necessary that all vectors $v_0,\ldots,v_k$ are integral.

\section{Proof of the structure theorem}

In this section we are going to prove the structure theorem. The proof outline
is as follows: we can show that the integral points in a polytope $P$ can be covered
with polynomially many integral parallelepipeds.
The choice for $X$ is then simply the set of vertices of those parallelepipeds.
Now consider any vector $a$ which is a conic integer combination of points in $P$.
Then by Lemma~\ref{lem:ExistenceOfIntegralSolWithSupport2d} we can assume that
$a$ is combined by using only a constant number of points in $P \cap \setZ^d$. 
Consider such a point $x^*$ and say it is used $\lambda^*$ times. We will show that
the weight $\lambda^*$ can be almost entirely redistributed to the vertices of one of the
parallelepipeds containing $x^*$.

Let us make these arguments more formal. 
We begin by showing that all the integer points in a polytope $P$ can indeed be covered with polynomially many
integral parallelepipeds as visualized in Figure~\ref{fig:CoveringWithIntParallelepipeds}.
\begin{figure}
\begin{center}
\psset{unit=0.9cm}
\begin{pspicture}(-0.5,-0.2)(10,6)
\psaxes[ticks=none,labels=none,arrowsize=5pt]{->}(0,0)(0,0)(8.5,5.5)
%\psaxes[arrowsize=5pt]{->}(0,0)(0,0)(8.5,5.5)
\def\sI{0.13}
\def\sII{0.205}
\FPeval{\sIrec}{1.0 / \sI}
\FPtrunc{\sIrec}{\sIrec}{4}
\FPeval{\sIIrec}{1.0 / \sII}
\FPtrunc{\sIIrec}{\sIIrec}{4}
\psline[linewidth=1.5pt](\sIrec,0)(0,\sIIrec)(0,0)
\pspolygon[fillstyle=solid,fillcolor=lightgray](\sIrec,0)(0,\sIIrec)(0,0)
%\pspolygon[fillstyle=solid,fillcolor=lightgray](0,0)(0,4)(1,4)(6,1)(7,0)
\pspolygon[fillstyle=solid,fillcolor=red!50!white,linestyle=dashed,linewidth=0.5pt](0,0)(0,1)(6,1)(6,0)
\pspolygon[fillstyle=solid,fillcolor=green!30!gray,linestyle=dashed,linewidth=0.5pt](0,2)(0,4)(1,4)(1,2)
\psline[linewidth=3pt,linecolor=darkgray](6,1)(7,0)
\pspolygon[fillstyle=solid,fillcolor=blue!50!black,opacity=0.5,linestyle=dashed,linewidth=0.5pt](1,2)(2,3)(6,1)(5,0)
\FPtrunc{\maxI}{\sIrec}{0}
\FPtrunc{\maxII}{\sIIrec}{0}
\FPeval{\maxIp}{\maxI + 1} \FPtrunc{\maxIp}{\maxIp}{0}
\FPeval{\maxIIp}{\maxII + 1} \FPtrunc{\maxIIp}{\maxIIp}{0}
\multido{\I=0+1}{\maxIp}{ % x-Koordinate
 \FPeval{\iter}{((1 - \I * \sI) / \sII) + 1}  % maximales ganzzahliges y+1 so dass (x,y) im Knapsack Polytop fuer x=\I
 \FPtrunc{\iter}{\iter}{0}
 \multido{\N=0+1}{\iter}{ % y-Koordinate
   \psdot(\I,\N)
 }
}
\rput[c](8.5,5pt){$x_1$}
\rput[r](-5pt,5.3){$x_2$}
%\rput[c](4,3.2){$s^Tx \leq 1$}
\end{pspicture}
\caption{Covering the integer points of a polytope with integral parallelepipeds.\label{fig:CoveringWithIntParallelepipeds}}
\end{center}
\end{figure}
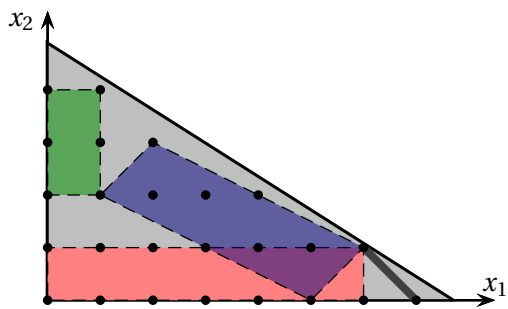
We will say that a polytope $S \subseteq \setR^d$ is \emph{symmetric around center $x_0$}
if $x_0 - x \in S \Leftrightarrow x_0 + x \in S$ for any $x \in \setR^d$.

\begin{lemma} \label{lem:CoveringWithParallelepipeds}
Let $P = \{ x \in \setR^d \mid Ax \leq b\}$ be a polytope where $A \in \setZ^{m \times d}$ and $b \in \setZ^m$ and set $\Delta := \max\{ \|A\|_{\infty},\|b\|_{\infty},2\}$. %described by $m$ inequalities with integral coefficients of absolute value at most $\Delta \geq 2$. % $\Delta := \max\{ \|A\|_{\infty},\|b\|_{\infty}\}$.
Then there exists a set $\parallelepipedset$ of  $|\parallelepipedset| \leq N := m^d d^{O(d)} (\log \Delta)^{d}$ many integral 
parallelepipeds such that 
\[
P \cap \setZ^d \subseteq \bigcup_{\parallelepiped \in \parallelepipedset} \parallelepiped \subseteq P.
\]
Moreover the set $\parallelepipedset$ can be computed in time $N^{O(1)}$. % $m^{O(d)} \cdot d^{O(d^2)} \cdot (O(\log \Delta))^{O(d^2)}$.
\end{lemma}

\begin{figure}
  \begin{center}
\psset{unit=0.5cm}
\begin{pspicture}(-0.2,-1.3)(9,8.7)
\pspolygon[fillstyle=solid,fillcolor=black!20!white,linewidth=1pt](0,0)(10,4)(0,8)%
% lines and red cell C
%
\psline[linewidth=0.75pt,linestyle=dashed](2.7,1.2)(2.7,6.8) \psline[linewidth=0.75pt,linestyle=dashed](5.3,2.2)(5.3,5.8)% \psline(2.7,0)(2.7,7) \psline(5.3,0)(5.3,7) % vert
\psline[linestyle=dashed,linewidth=0.75pt](0,1.3)(8.4,4.65) \psline[linestyle=dashed,linewidth=0.75pt](0,4)(5,6)% \psline(0,1.3)(10,5.3) \psline(0,4)(10,8) % hor
\psline[linestyle=dashed,linewidth=0.75pt](0,6.7)(8.4,3.35) \psline[linestyle=dashed,linewidth=0.75pt](0,4)(5,2)% \psline(0,6.7)(10,2.7) \psline(0,4)(10,0)
\psline[linewidth=2.5pt](0,0)(0,8) \rput[c](0,8.4){$b_i-A_ix=0$}%
%\psline[linewidth=2pt,linestyle=dashed](5.3,0)(5.3,8)%
%\rput[c](5.3,-0.3){$b_i-A_ix=(1+\frac{1}{d})^{\setZ}$}
%\rput[c](5.3,-0.3){$b_i-A_ix$}%
%\rput[c](5.3,-0.9){$=$}%
%\rput[c](5.3,-1.5){$(1+\frac{1}{d})^{\setZ}$}%
%
%
\psline[linestyle=dotted](5.3,2)(5.3,0)
\psline[linestyle=dotted](2.7,1)(2.7,0)
\psbrace[rot=90,braceWidthInner=2pt,braceWidthOuter=2pt,ref=1C,nodesepB=8pt](2.7,0)(5.3,0){$\big\{ x \in \setR^d \mid \alpha_{j_i} \leq A_ix-b \leq \alpha_{j_i+1}\big\}$}
\pspolygon[linestyle=solid,fillstyle=solid,linecolor=red,fillcolor=red!50!lightgray](2.7,2.92)(2.7,5.08)(3.4,5.35)(5.3,4.6)(5.3,3.4)(3.4,2.65)%
%
% ------------
% int hull C_I
\pnode(5.5,6.3){C1} \pnode(4.5,4.5){C2}%
\pspolygon[fillstyle=vlines*,fillcolor=green!40!gray,linecolor=green!50!black,hatchcolor=green!30!black](3,3)(3,5)(4,5)(5,4)(4,3)%
\psdots[linecolor=black,linewidth=1.25pt](3,3)(3,5)(4,5)(5,4)(4,3)(3,4)(4,4)%
\ncline[linecolor=green!50!black,arrowsize=6pt]{->}{C1}{C2} \nput[labelsep=2pt]{90}{C1}{\textcolor{green!50!black}{$C_I$}}%
%
% ------------
%\psdots(0,0)(0,1)(0,2)(0,3)(0,4)(0,5)(0,6)(0,7)(0,8)(1,1)(1,2)(1,3)(1,4)(1,5)(1,6)(1,7)(2,1)(2,2)(2,3)(2,4)(2,5)(2,6)(2,7)(3,2)(3,3)(3,4)(3,5)(3,6)(4,2)(4,3)(4,4)(4,5)(4,6)(5,2)(5,3)(5,4)(5,5)(5,6)(6,3)(6,4)(6,5)(7,3)(7,4)(7,5)(8,4)(9,4)
%\pspolygon[fillstyle=solid,fillcolor=black!40!white](2.7,3)(2.7,5)(3.4,5.2)
%\pnode(0,-8pt){A1} \pnode(2.7,-8pt){A2} \ncline[arrowsize=5pt]{|<->|}{A1}{A2} \nbput[labelsep=2pt]{$(1+\frac{1}{d})^{\setZ}$}
%\pnode(0,-30pt){B1} \pnode(5.2,-30pt){B2} \ncline[arrowsize=5pt]{|<->|}{B1}{B2} \nbput[labelsep=2pt]{$(1+\frac{1}{d})^{j+1}$}
\rput[c](9,5){$P$}%
\pnode(3.4,7){B1} \pnode(3.4,5.4){B2} \ncline[linecolor=red,arrowsize=6pt]{->}{B1}{B2} \nput{90}{B1}{\red{cell $C$}}% 3-7
\end{pspicture}
\caption{Visualization of the slicing of $P$ into cells. \label{fig:SlicingPintoCells}}
\end{center}
\end{figure}

\begin{proof}
  First of all, by Lemma~\ref{lem:InfinityNormOfVertex} every point $x \in P$ has %\footnote{One can use the following argument: the $\| \cdot \|_{\infty}$-norm is maximized by a vertex of $P$. By Cramer's rule, the coordinates of a vertex of $P$ are of the form $\det(Q)/\det(R)$    where $Q$ and $R$ are $d \times d$ matrices filled with entries from $\{ -\Delta,\ldots,+\Delta\}$. Then $|\det(Q)| \leq d! \Delta^d$ as one can see from Laplace formula.}
  $\|x\|_{\infty} \leq d! \cdot \Delta^d$  and hence $|A_ix-b_i| \leq (d+1)\Delta\cdot d!\cdot\Delta^d \leq (d+1)!\cdot\Delta^{d+1}$.
We want to partition the interval $[0,(d+1)! \cdot \Delta^{d+1}]$ into 
smaller intervals $[\alpha_j,\alpha_{j+1}]$ 
such that for any integer values $p,q \in [\alpha_j,\alpha_{j+1}] \cap \setZ$ one has $\frac{p}{q} \leq 1+\frac{1}{d^{2}}$.
For this we can choose $\alpha_j := (1+\frac{1}{d^2})^{j-2}$ for $j=1,\ldots,K$ and $\alpha_0 := 0$. 
The number of intervals is $K \leq O(\log_{1+1/d^2} ((d+1)! \cdot \Delta^{d+1})) \leq O(d^3(\log \Delta + \log d))$.

Our next step is to partition $P$ into \emph{cells} such that points in 
the same cell have roughly the same slacks for all the constraints.
For each sequence  $j_1,\ldots,j_{m} \in \{ 0,\ldots,K-1\}$ we define a cell $C = C(j_1,\ldots,j_{m})$ as
\[
 \left\{ x \in \setR^d \mid \alpha_{j_i} \leq b_i - A_ix \leq \alpha_{j_i+1} \; \forall i\in[m] \right\},
\]
see Figure~\ref{fig:SlicingPintoCells}.
In other words, we partition the polytope $P$ using at most $M := m\cdot K$ many hyperplanes.
By \emph{Buck's Formula}~\cite{BucksFormulaHyperplaneArrangement1943}, the number of $k$-dimensional cells
in a $d$-dimensional arrangement of $M$ hyperplanes is upper bounded by
$f_k^{(d)}(M) := \sum_{i=d-k}^{d}{M \choose i}{i \choose d-k} \leq d \cdot (2M)^d$.
Then the total number of cells of any dimension is generously bounded
by $(d+1) \cdot d \cdot (2M)^d \leq m^d d^{O(d)} (\log \Delta)^d$.
%By a perturbation argument our number of non-empty cells is bounded by the number of
%full dimensional cells in a hyperplane arrangement with $M$ hyperplanes.
%It is a well-known result that the latter quantity is at most ${M \choose 0} + \ldots + {M \choose d} \leq m^d d^{O(d)} (\log \Delta)^d$, 
%see e.g. Matousek~\cite{DiscreteGeometryMatousek2002}. 

Before we continue, we want to comment on the geometry of the cells.
Fix one of those cells
 $C = \big\{ x \in \setR^d \mid \alpha_{j_i} \leq b_i - A_ix \leq \alpha_{j_i+1} \; \forall i\in[m] \big\} \subseteq P$.
 Then for any coordinate $i$ with $0 \leq \alpha_{j_i}<d^2$, we have $|\alpha_{j_i+1}-\alpha_{j_i}| < 1$. That means that the integer hull $\textrm{conv}(C \cap \setZ^d)$ is \emph{flat} in direction $A_i$ unless the cell has a slack of at least $d^2$ in that direction.

Now we proceed to prove that there are only $d^{O(d)}$ integral parallelepipeds necessary to cover the integer points of this cell.
We assume that $C \cap \setZ^d \neq \emptyset$, otherwise there is nothing to do.
Next, fix any integral point $x_0 \in C \cap \setZ^d$ and define a slightly larger polytope 
$Q := \conv{ x_0 \pm (x_0-x) \mid x \in C_I }$, see Figure~\ref{fig:CoveringACellWithFewParallelepipeds}.
By construction,  $Q$ is a symmetric polytope with center $x_0$. Moreover
all its vertices are integral because vertices of $C_I$ and $x_0$ are integral.
%with integral vertices
%containing $C_I$ such that also the center $x_0$ is integral.
The reason why we consider a symmetric polytope
is the following classical theorem which is paraphrased from John: 
\begin{theorem}[John~\cite{EllipsoidsJohn1948}\label{thm:ParaphrasedJohnsTheorem}]
For any polytope $\tilde{P} \subseteq \setR^d$ that is symmetric around center $x_0$,
there are $k \leq \frac{1}{2}d(d+3)$ %\marginpar{T: OK, $d(d+1)/2$ was not enough. Consider the cube in $d=2$. It needs all 4 vertices, but $2(2+1)/2=3$. But for each contact point $x$, we take $x$ and $-x$. . .} 
many extreme points $x_1,\ldots,x_k \in \textrm{vert}(\tilde{P})$ such that $\tilde{P} \subseteq \conv{ x_0 \pm \sqrt{d}\cdot (x_0-x_j) \mid j\in[k]}$.
\end{theorem}
The original statement says that there is in fact an origin-centered ellipsoid $E$ %with center $\bm{0}$ 
with $x_0 + \frac{1}{\sqrt{d}}E \subseteq \tilde{P} \subseteq x_0 + E$. But additionally John's Theorem provides a set of
\emph{contact points} in $\partial P \cap \partial E$ whose convex hull already contains the scaled ellipsoid $x_0 + \frac{1}{\sqrt{d}}E$.
Moreover, the number of necessary contact points is at most $\frac{d}{2}(d+3)$, implying the above statement.

So we apply Theorem~\ref{thm:ParaphrasedJohnsTheorem} to $Q$ with center $x_0$  and
obtain a list of points $x_1,\ldots,x_k \in \textrm{vert}(C_I)$ with $k \leq \frac{1}{2}d(d+3)$ such that 
\[
  C_I \subseteq Q \subseteq \conv{ x_0 \pm \lceil\sqrt{d}\rceil \cdot (x_0-x_j) \mid j\in [k]} =: Q'.
\]
Now it is not difficult to cover $C_I$ with parallelepipeds of the form 
\[
  \parallelepiped(J) := \Big\{ x_0 + \sum_{j\in J} \mu_j(x_j-x_0) \mid  |\mu_j| \leq \lceil\sqrt{d}\rceil \;\; \forall j \in J \Big\}
\]
with $J \subseteq [k]$  and  $\{ x_j-x_0 \mid j \in J\}$ linearly independent. %More precisely, we claim that even $Q' \subseteq \bigcup_{J \subseteq [k]: |J| \leq d} \parallelepiped(J)$.
To see this take any point $x \in Q'$. By Carath{\'e}odory's Theorem, $x$ lies already in the 
convex hull of $x_0$ plus at most $d$ affinely independent vertices of $Q'$, thus there is a subset of 
indices $J \subseteq [k]$ of size $|J| \leq d$ and signs $\varepsilon_j \in \{ \pm 1\}$ with $x \in \conv{\{ x_0 \} \cup \{ x_0 + \varepsilon_j\lceil \sqrt{d}\rceil \cdot (x_j-x_0) \mid j \in J\}}$. Then clearly $x \in \parallelepiped(J)$.

Finally it remains to show that all parallelepipeds $\parallelepiped(J)$ are still in $P$. 
%Note that $\parallelepiped(J) \subseteq Q(d^{3/2})$, thus it suffices to argue that any vertex $y$ of $Q(d^{3/2})$ lies in $P$.
%Each such vertex is of the form $y = x_0 � d^{3/2}(x_0-x)$ with $x \in \textrm{vert}(C_I)$. Then we can verify that
Let $x = x_0 + \sum_{j \in J} \mu_j(x_j-x_0)$ with $|\mu_j| \leq \lceil \sqrt{d} \rceil$. We need to verify that $x$ does not violate a constraint $i \in [m]$. First consider the case that $j_i>0$, that means $C$ does
not lie in the very first slice of constraint $i$. In this case we have
\[
   b_i-A_ix 
 \geq \underbrace{b_i-A_ix_0}_{\geq\alpha_{j_i}} - \sum_{j\in J} \underbrace{|\mu_j|}_{\leq\lceil\sqrt{d}\rceil}\cdot\underbrace{|A_ix_j - A_ix_0|}_{\quad\leq\alpha_{j_i+1}-\alpha_{j_i}\leq\frac{\alpha_{j_i}}{d^2}}  \geq 0
\]
where we crucially use that $|\alpha_{j_i+1} - \alpha_{j_i}| = (1+\frac{1}{d^2})\alpha_{j_i} - \alpha_{j_i} = \frac{\alpha_{j_i}}{d^2}$. If indeed $j_i=0$, then
$|\alpha_{j_i+1}-\alpha_{j_i}| = (1+\frac{1}{d^2})^{-1}<1$ and for integer points $x_j$
and $x_0$ one must have $|A_ix_j - A_ix_0| = 0$.
Finally observe that the number of subsets $J$ of size at most $d$ is $(\frac{1}{2}d(d+3))^d = d^{O(d)}$
which then gives the desired bound. 

Now let us argue how to make this constructive in time $N^{O(1)}$. For each cell $C$, we list the vertices 
of the integer hull $C_I$ in time $d^{O(d)}m^{O(d)}(\log \Delta)^{O(d)}$ by Theorem~\ref{thm:NumberOfExtremePoints}.
Computing the minimum volume ellipsoid containing all those vertices is indeed 
 a semidefinite program that can be solved in time polynomial in the encoding length of the vertices of $C_I$. We refer to Chapter 8 of Boyd and Vandenberghe~\cite{Convex-Optimization-BoydVandenberghe-2014} for details.
The contact points can be inferred from the dual solution of this SDP and the associated parallelepipeds
can be easily computed. 
% Every step can be made constructive at the expense of a fixed polynomial 
% blowup (with the exponent independent of $d$), just for optimizing over $C_I$ one needs
%to solve an integer program which accounts for  extra $d^{O(d)}$ factor. This can be absorbed into
% the constants.
\end{proof}

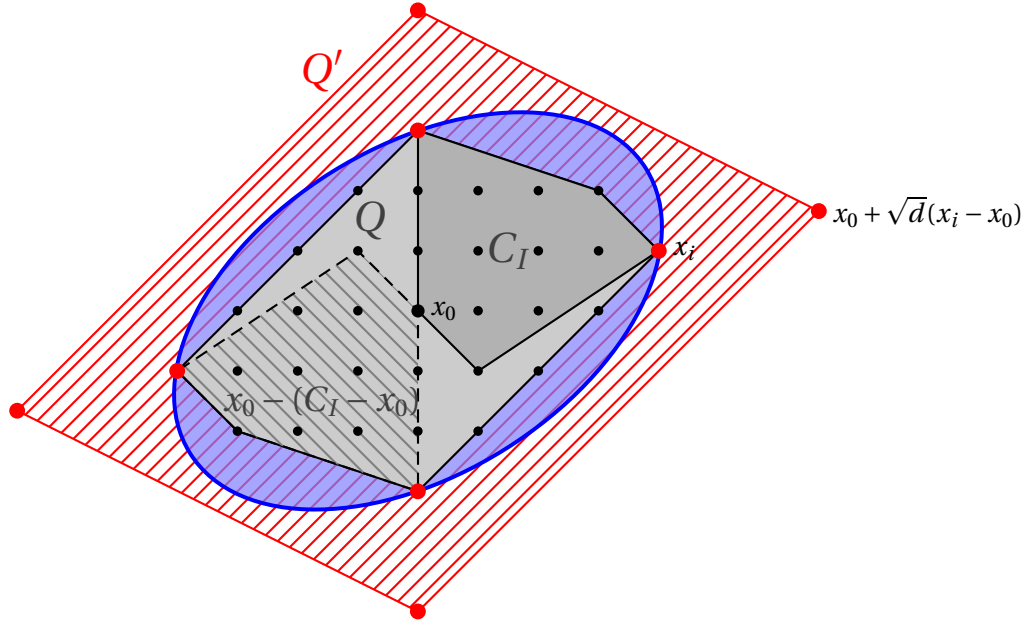
\begin{figure}
\begin{center}
\psset{unit=0.6cm}
\begin{pspicture}(-4,-5)(4,3.5)
\def\mydotsize{2.5pt}
% contact points scaled by 5/3
\pspolygon[fillstyle=hlines,linecolor=red,hatchcolor=red](0,5)(6.66,1.66)(0,-5)(-6.66,-1.66)
\psdots[linecolor=red,linewidth=2pt](0,5)(6.66,1.66)(0,-5)(-6.66,-1.66)
\rput[c](-1.6,4.0){\LARGE{\red{$Q'$}}}
% Ellipse
\rput{32}(0,0){\psellipticwedge[linewidth=1.5pt,linecolor=blue,fillstyle=solid,fillcolor=blue!50!white,opacity=0.7](0,0)(4.5,2.75){0}{-1}} 
% symmetric closure 
\pspolygon[fillstyle=solid,fillcolor=black!20!white](0,3)(3,2)(4,1)(0,-3)(-3,-2)(-4,-1) \rput[c](-0.8,1.5){\LARGE{\darkgray{$Q$}}} 
% The cell C_I
\pspolygon[fillstyle=solid,fillcolor=black!30!white](0,0)(0,3)(3,2)(4,1)(1,-1) \rput[c](1.5,1){\LARGE{\darkgray{$C_I$}}}
\pspolygon[linestyle=dashed,fillstyle=vlines,hatchcolor=gray](0,0)(0,-3)(-3,-2)(-4,-1)(-1,1) \rput[c](-1.6,-1.5){\Large{\darkgray{$x_0-(C_I-x_0)$}}} % <- -C_I
\psdots(0,0)(0,1)(0,2)(0,3)(1,-1)(1,0)(1,1)(1,2)(2,0)(2,1)(2,2)(3,1)(3,2)(4,1)
\psdots(0,0)(0,-1)(0,-2)(0,-3)(-1,1)(-1,0)(-1,-1)(-1,-2)(-2,0)(-2,-1)(-2,-2)(-3,-1)(-3,-2)(-4,-1)(3,0)(2,-1)(1,-2)(-3,0)(-2,1)(-1,2) % points in Q/C_I
\cnode*(0,0){\mydotsize}{x0} \nput[labelsep=2pt]{0}{x0}{$x_0$}
% contact points
\psdots[linecolor=red,linewidth=2pt](0,3)(4,1)(0,-3)(-4,-1)
\pnode(4,1){xi} \pnode(6,-1){xil} %\ncline[arrowsize=6pt,linewidth=1pt,nodesepB=3pt]{->}{xil}{xi} %\nput{-90}{xil}{contact point $x_i$}
\pnode(6.66,1.66){sxi}
\nput[labelsep=5pt]{0}{xi}{$x_i$}
\nput[labelsep=5pt]{90}{sxi}{$x_0+\sqrt{d}(x_i-x_0)$}
% scaled *2contact points
%\pspolygon(0,6)(8,2)(0,-6)(-8,-2)
%\psdots[linecolor=red,linewidth=2pt](0,6)(8,2)(0,-6)(-8,-2)

\end{pspicture}
\caption{Visualization of covering the integer points in a cell $C_I$: Start by obtaining the symmetric closure $Q$. Then compute the contact points of a minimum volume ellipsoid containing $Q$. Scale those points with $\sqrt{d}$ to obtain a polytope $Q'$ with only $O(d^2)$ vertices containing $C_I$. Then extend a triangulation of $Q'$ to $d^{O(d)}$ many parallelepipeds.\label{fig:CoveringACellWithFewParallelepipeds}}
\end{center}
\end{figure}
Note that one could have used the following simpler arguments to obtain a 
weaker bound that still leads to a polynomial time algorithm for bin packing if $d$ is constant: first of all, each cell is defined by selecting $m$ values $\alpha_{j_i} \in \{ 0,\ldots,K-1\}$, hence the total number of cells is trivially
upper bounded by $K^m$. Then every cell $C_I$
has polynomially many vertices, hence it can be partitioned into polynomially many
simplices. Then each simplex can be extended to a parallelepiped, whose union again covers $C_I$.

As a side remark, the partitioning with shifted hyperplanes was used before e.g. 
in~\cite{IntegerPointsInPolyhedra-CookHartmannKannanMcDiarmid-Combinatorica92} to bound the number of extreme points of $\conv{P \cap \setZ^d}$.
The next lemma says why parallelepipeds are so useful. Namely the weight of
any point in it can be almost completely redistributed to its vertices.
\begin{lemma} \label{lem:RedistritionWithinParallelepiped}
Given an integral parallelepiped $\parallelepiped$ % 
with vertices $X := \textrm{vert}(\parallelepiped)$. % whose vertices $X = \textrm{vert}(\parallelepiped)$ are integral. 
Then for any $x^* \in \parallelepiped \cap \setZ^d$ and $\lambda^* \in \setZ_{\geq0}$ there is an 
integral vector $\mu \in \setZ_{\geq0}^{\parallelepiped \cap \setZ^d}$ such that
\begin{enumerate}
\item[(1)] $\lambda^*x^* = \sum_{x \in \parallelepiped \cap \setZ^d} \mu_x x$ 
\item[(2)] $|\supp(\mu) \backslash X| \leq 2^d$
\item[(3)] $\mu_x \in \{ 0,1\} \; \forall x \notin X$.
\end{enumerate}
\end{lemma}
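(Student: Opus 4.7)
The plan is to mimic the proof of Lemma~\ref{lem:ExistenceOfIntegralSolWithSupport2d} but with an additional local move that exploits the parallelepiped structure. Starting from the trivial solution $\mu := \lambda^* e_{x^*}$, I would iteratively apply the two exchange moves described below until neither is applicable. Both moves preserve $\sum_x \mu_x x = \lambda^* x^*$ and integer nonnegativity, and each strictly decreases the potential $M\cdot\Phi_1(\mu)+\Phi_2(\mu)$, where $\Phi_1(\mu) := \sum_{x \notin X} \mu_x$, $\Phi_2(\mu) := \sum_x \mu_x\|x\|_2^2$, and $M$ is chosen larger than $\max_{x \in \Pi \cap \setZ^d}\|x\|_2^2$. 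Since this potential takes values in a discrete nonnegative set, the procedure terminates. Condition (1) of the lemma is the maintained invariant, so only (2) and (3) need to be verified at a terminal $\mu$.

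\emph{Move A (halving a non-vertex).} If $x \notin X$ has $\mu_x \geq 2$, write $x = v_0 + \sum_i \mu_i v_i$ with $\mu_i \in [-1,1]$ and let $\epsilon_i := \sign(\mu_i) \in \{\pm 1\}$ (arbitrary when $\mu_i=0$). The matched vertex $v_\epsilon := v_0 + \sum_i \epsilon_i v_i$ lies in $X$, and the reflected point $y := 2x - v_\epsilon = v_0 + \sum_i(2\mu_i-\epsilon_i)v_i$ satisfies $|2\mu_i-\epsilon_i|\leq 1$ by the choice of $\epsilon$, hence $y \in \Pi$; it is integral because $2x, v_\epsilon \in \setZ^d$, and $y \neq v_\epsilon$ (otherwise $x = v_\epsilon$ would be a vertex). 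Updating $\mu_x \mapsto \mu_x - 2,\ \mu_{v_\epsilon} \mapsto \mu_{v_\epsilon}+1,\ \mu_y \mapsto \mu_y+1$ strictly decreases $\Phi_1$ by $1$ or $2$, so $M\Phi_1+\Phi_2$ drops for $M$ large.
\emph{Move B (midpoint reduction, as in Eisenbrand--Shmonin).} If distinct non-vertex $x \neq y \in \supp(\mu)$ satisfy $x \equiv y \pmod 2$, then $z := \tfrac{1}{2}(x+y) \in \Pi \cap \setZ^d$; updating $\mu_x \mapsto \mu_x-1,\ \mu_y \mapsto \mu_y-1,\ \mu_z \mapsto \mu_z+2$ strictly decreases $\Phi_2$ (by $\|x-y\|_2^2/2 \geq 1/2$, via the parallelogram identity) and changes $\Phi_1$ by $-2$ or $0$ according as $z$ is a vertex or not.

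At termination, inapplicability of Move A forces $\mu_x \in \{0,1\}$ for every $x \notin X$ (condition (3)), and inapplicability of Move B forces distinct non-vertex points of $\supp(\mu)$ to lie in distinct parity classes of $(\setZ/2\setZ)^d$, yielding $|\supp(\mu) \setminus X| \leq 2^d$ (condition (2)). The crux is the geometric fact powering Move A: only because $\Pi$ is a parallelepiped do the sign-matched coordinates $2\mu_i - \epsilon_i$ all remain in $[-1,1]$, placing $y$ back inside $\Pi$; this is the parallelepiped-specific ingredient that upgrades the Eisenbrand--Shmonin support bound $|\supp(\mu)|\leq 2^d$ into the hybrid integer/$\{0,1\}$ statement of the lemma.
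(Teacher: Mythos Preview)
Your proof is correct and follows essentially the same approach as the paper: both use the reflection-to-nearest-vertex move (your Move~A) to push weight $\geq 2$ off non-vertices, combined with the Eisenbrand--Shmonin parity midpoint (your Move~B) to bound the non-vertex support, the paper phrasing termination via minimality of $\Phi_1 = \sum_{x\notin X}\mu_x$ rather than an explicit descent on a combined potential. One minor slip: your stated constant $M > \max_{x}\|x\|_2^2$ is not quite large enough, since Move~A can \emph{increase} $\Phi_2$ by $\tfrac12\|v_\epsilon - y\|_2^2$, which may exceed $\max_x\|x\|_2^2$; taking $M$ a constant factor larger, or simply using the lexicographic order on $(\Phi_1,\Phi_2)$ (both are bounded nonnegative integers, as total weight $\sum_x\mu_x = \lambda^*$ is conserved), fixes this immediately.
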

\begin{proof}
Let $\parallelepiped = \{ v_0 + \sum_{i=1}^{k} \alpha_iv_i \mid |\alpha_i| \leq 1 \; \forall i=1,\ldots,k\}$ where $v_0$ is the (not necessarily integral) center of $\parallelepiped$.
Consider a vector $\mu$ that satisfies (1) and minimizes the potential
function $\sum_{x \notin X} \mu_x$ (i.e. the weight
that lies on non-vertices of $\parallelepiped$). We claim that $\mu$ also satisfies (2) and (3).

First consider the case that there is some point $x \in \parallelepiped \cap \setZ^d$
that is not a vertex and has $\mu_x \geq 2$. We write $x = v_0 + \sum_{i=1}^{k} \alpha_i v_i$ with $|\alpha_i| \leq 1$.
Let\footnote{Recall that $\sign(\alpha) = \begin{cases} 1 & \alpha \geq 0 \\ -1 & \alpha < 0 \end{cases}$} 
$y := v_0 + \sum_{i=1}^k \sign(\alpha_i) \cdot v_i \in \parallelepiped \cap \setZ^d$ be the vertex of $\parallelepiped$ that we obtain by rounding $\alpha_i$ to $\pm 1$, see Figure~\ref{fig:RedistributionInParallelepiped}. 
The mirrored point $z = x + (x-y) = v_0 + \sum_{i=1}^{k} (2\alpha_i-\sign(\alpha_i))\cdot v_i$ lies in $\parallelepiped$
as well and is also integral. Here we use the fact that $-1 \leq (2\alpha_i-\sign(\alpha_i)) \leq 1$. As $x = \frac{1}{2}(y+z)$, we can reduce the weight on $x$ by $2$
and add $1$ to $\mu_y$ and $\mu_z$. We obtain again a vector that satisfies (1), but the weight 
$\sum_{x \notin X} \mu_x$ has decreased. 
\begin{figure}
\begin{center}
\psset{xunit=1cm,yunit=1.2cm}
\begin{pspicture}(0,-0.2)(4,2.2)
\pspolygon[linewidth=1pt,fillstyle=solid,fillcolor=lightgray](0,0)(3,0)(4,2)(1,2)
\psline[linestyle=dashed](1.5,0)(2.5,2)
\psline[linestyle=dashed](0.5,1.0)(3.5,1)
\rput[c](3,1.5){\Huge{$\gray{\parallelepiped}$}}
\def\mydotsize{2.5pt}%
\cnode*(0,0){\mydotsize}{x00}
\cnode*(3,0){\mydotsize}{x10}
\cnode*(1,2){\mydotsize}{x01} % y
\cnode*(4,2){\mydotsize}{x11}
\cnode*[linecolor=gray](1.5,1.25){\mydotsize}{x} % x
\cnode*[linecolor=gray](2,0.5){\mydotsize}{z}
%\cnode*[linecolor=gray](3.5,0.5){\mydotsize}{y} \nput[labelsep=2pt]{-30}{y}{$y$}
%\cnode*[linecolor=gray](4,1){\mydotsize}{z} \nput[labelsep=2pt]{90}{z}{$z$}
\ncline[arrowsize=5pt,linewidth=1.5pt]{->}{x}{x01} %\nbput[labelsep=0pt,npos=0.4]{$d$}
\ncline[arrowsize=5pt,linewidth=1.5pt]{->}{x}{z} %\naput[labelsep=0pt,npos=0.4]{$-d$}
%\nput{-90}{x10}{$\in X$}
\nput{90}{x01}{$y$}
\nput[labelsep=2pt]{45}{x}{$x$}
\nput[labelsep=2pt]{0}{z}{$z$}
% mu0=0.4, mu1=0.3  mu2=0.1
\end{pspicture}
\caption{Weight of $y$ is redistributed to vertex in parallelepiped. \label{fig:RedistributionInParallelepiped}}
\end{center}
\end{figure}
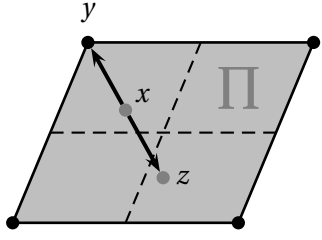

So it remains to see what happens when all vectors in $(\parallelepiped \cap \setZ^d) \setminus X$ carry weight at most 1.
Well, if these are at most $2^d$, then we are done. Otherwise, we can reiterate
the
arguments from Lemma~\ref{lem:ExistenceOfIntegralSolWithSupport2d}. 
There will be 2 points of the same parity, which can be joined
to create a new point carrying weight at least 2 and part of this weight can be redistributed to a vertex. 
This shows the claim. 
\end{proof}

Now we simply combine Lemmas~\ref{lem:ExistenceOfIntegralSolWithSupport2d}, \ref{lem:CoveringWithParallelepipeds}
and \ref{lem:RedistritionWithinParallelepiped}.
\begin{proof}[Proof of Structure Theorem~\ref{thm:StructureTheorem}]
We choose $X$ as the at most $N = m^d d^{O(d)} (\log \Delta)^{d}$ many vertices of parallelepipeds $\parallelepipedset$
that are constructed in Lemma~\ref{lem:CoveringWithParallelepipeds} in running time $N^{O(1)}$
(there is an extra $2^d$ factor, 
that accounts for the maximum number of vertices per parallelepiped; 
this is absorbed by the $O$-notation). Now consider any vector $a \in \textrm{int.cone}(P \cap \setZ^d)$. 
By Lemma~\ref{lem:ExistenceOfIntegralSolWithSupport2d} there is a vector $\mu \in \setZ_{\geq0}^{P \cap \setZ^d}$ 
with $|\supp(\mu)| \leq 2^d$ and $a = \sum_x \mu_x \cdot x$. For every $x$ with $\lambda_x>0$ we consider a
parallelepiped $\parallelepiped \in \parallelepipedset$ with $x \in \parallelepiped \cap \setZ^d$. 
Then we use Lemma~\ref{lem:RedistritionWithinParallelepiped} to redistribute the weight from $x$
to the vertices of $\parallelepiped$. For each parallelepiped, there are at most $2^d$ non-vertices with
a weight of 1. In the case in which a vector is used by several parallelepipeds, we
can further redistribute its weight to the vertices of one of the involved parallelepipeds.
This process terminates as the total weight on $X$ keeps increasing.
%When combining the parallelepipeds
% (except of a weight of at most $2^d$ that may remain). 
We denote the new solution by $\lambda$. As we are using at most $2^d$ parallelepipeds, 
we have %It is not difficult to see that
$|\supp(\lambda) \cap X| \leq 2^d \cdot 2^d$ and $|\supp(\lambda) \backslash X| \leq 2^d \cdot 2^d$. 
%$|\supp(\lambda)| \leq 2^d \cdot (2^d+2^d)$ and $\sum_{x \in (P \cap \setZ^d) \backslash X} \lambda_x \leq 2^d \cdot 2^d$. 
\end{proof}

\section{Proof of the main theorem}

Now that we have the Structure Theorem, the claim of 
Theorem~\ref{thm:MainGeneralTheorem} is easy to show.
\begin{proof}[Proof of Main Theorem~\ref{thm:MainGeneralTheorem}]
  As both polyhedra $P$ and $Q$ are assumed to be rational, we can write
  them as
  $P = \{ x \in \setR^d \mid Ax \leq b\}$ and $Q = \{ x \in \setR^d \mid \tilde{A}x \leq \tilde{b}\}$ with $A,b,\tilde{A},\tilde{b}$ integral
  while the encoding length increases by at most a polynomial factor.
%  be the given polytopes. 
%Here we assume that the coefficients in the inequality description are integral and 
%the numbers in $A,b$ and $\tilde{A},\tilde{b}$ are bounded  in absolute value by $\Delta$ and $\tilde{\Delta}$, respectively.
  We abbreviate $\Delta := \max\{ \|A\|_{\infty},\|b\|_{\infty},2\}$
  and $\tilde{\Delta} := \max\{ \|\tilde{A}\|_{\infty}, \|\tilde{b}\|_{\infty},2 \}$. 

We compute the set $X$ of size at most $N := m^{d} d^{O(d)} (\log \Delta)^{d}$
from Theorem~\ref{thm:StructureTheorem} for the polytope $P$ in time
$N^{O(1)}$. Now let $y^* \in \textrm{int.cone}(P \cap \setZ^d) \cap Q$ be an unkown target vector.
Then we know by Theorem~\ref{thm:StructureTheorem} that there is a vector $\lambda^* \in \setZ^{P \cap \setZ^d}_{\geq 0}$ 
such that $\sum_{x \in P \cap \setZ^d} \lambda_x^* x = y^*$, $|\supp(\lambda^*) \cap X| \leq 2^d$, $|\supp(\lambda^*) \backslash X| \leq 2^d$
and $\lambda_x^* \in \{ 0,1\}$ for $x \in (P \cap \setZ^d) \backslash X$.

After enumerating all subsets of $X' \subseteq X$ of cardinality $|X'| \leq 2^{2d}$
and running the subsequent test for each of them, we may
assume to know the set $X' = X \cap \supp(\lambda^*)$. Note that
there are at most $N^{2^{2d}}$ sets to test\footnote{Actually we know that $X'$ consists of the vertices of at most $2^d$ parallelepipeds, thus it suffices to incorporate a factor of $N^{2^d}$, but the improvement would
  be absorbed by the $O$-notation later, anyway.}. 
At the expense of another factor $2^{2d}+1$ we guess the number $k = \sum_{x \notin X'} \lambda^*_x \in \{ 0,\ldots,2^{2d}\}$ of
extra points. Now we can set up an integer program with few variables. 
We use variables $\lambda_x$ for $x \in X'$ to determine the correct multiplicities of
the points in $X$. Moreover, we have variables $x_1,\ldots,x_k \in \setZ_{\geq0}^{d}$ to determine
which extra points to take with unit weight. Additionally we
use a variable $y \in \setZ^d$ to denote the target vector in polyhedron $Q$.
The ILP is then of the form 
\begin{eqnarray*}
% \sum_{x \in X'} \lambda_x + k &\leq& \delta\\
 Ax_i &\leq& b \quad \forall i=1,\ldots,k \\
 \sum_{x \in X'} \lambda_x x + \sum_{i=1}^k x_i  &=& y \\
 \tilde{A}y &\leq& \tilde{b} \\ 
 \lambda_x &\in& \setZ_{\geq0} \quad \forall x \in X' \\
 x_i,y &\in& \setZ^d \quad \forall i=1,\ldots,k 
\end{eqnarray*}
and given that we made the correct guesses, this system has a solution.
The number of variables is $|X'| + (k+1)d \leq 2^{O(d)}$ and the number of
constraints is $km + d + \tilde{m} + |X'|d  = 2^{O(d)}m+\tilde{m}$ as well. Note that the largest coefficient in the ILP
is at most $\Delta' := \max\{ d! \cdot \Delta^d, \tilde{\Delta}\}$ as $\|x\|_{\infty} \leq d! \cdot \Delta^d$ for $x \in X'$ (see Lemma~\ref{lem:InfinityNormOfVertex}), as well as $\max\{ \|A\|_{\infty},\|b\|_{\infty}\} \leq \Delta$ and $\max\{ \|\tilde{A}\|_{\infty},\|\tilde{b}\|_{\infty}\} \leq \tilde{\Delta}$. Hence the ILP can be solved in time $(2^{O(d)})^{2^{O(d)}} \cdot (2^{O(d)}m+\tilde{m})^{O(1)} \cdot (\log \Delta')^{O(1)} \leq 2^{2^{O(d)}} \cdot \textrm{enc}(P)^{O(1)} \cdot \textrm{enc}(Q)^{O(1)}$
via Theorem~\ref{thm:ILPinFixedDim}.

We note that the largest factor in upper bounding the total running time is indeed
the term $N^{2^{2d}} \leq \enc(P)^{2^{O(d)}}$ required to guess the correct choice of $X'$.
Multiplying the different terms results in a total running time of the form $\enc(P)^{2^{O(d)}} \cdot \enc(Q)^{O(1)}$.  %$(m + \log \Delta)^{2^{O(d)}} \cdot (\tilde{m}+\log \tilde{\Delta})^{O(1)}$. 
%\]
\end{proof}
Note that the structure theorem uses that integer combination are taken w.r.t. a set $X = P \cap \setZ^d$
that is closed under taking convex combinations. On the other hand, without any assumption on the structure of $X$, the test $\textrm{int.cone}(X) \cap Q \neq \emptyset$ is $\mathbf{NP}$-hard even for
$d=1$ and $Q$ being a single point.
To see this, recall that given positive integers $a_1,\ldots,a_n$ with parameters $k \in \setN$ and $S \geq 2$,
it is $\mathbf{NP}$-hard to decide whether exactly $k$ of the numbers can be added to give
exactly $S$~\cite{GareyJohnson79}. Then for $S' := 2(a_1+\ldots+a_n)$, this decision problem is equivalent to $\textrm{int.cone}(\{S'+a_1,\ldots,S'+a_n\}) \cap \{kS'+S\} \neq \emptyset$.

\section{High multiplicity scheduling\label{sec:HighMultScheduling}}

In this section, we want to demonstrate the power and versatility of our
method by describing, how most scheduling problems with a constant number
of job types and a constant number of machine types can be solved in polynomial time.
For didactic purposes, we begin with a simple application and then discuss
how the approach can be generalized to capture many more scheduling problems. 

\subsection{Cutting Stock}

In the \emph{cutting stock} problem, we are given \emph{item sizes}
 $s_{1},\ldots,s_{d} \in \setN$ and multiplicity $a_j \in \setN$ for each item $j \in [d]$. Additionally we have a list of 
 $m$ \emph{bin types}, where bin type $i \in [m]$ has \emph{capacity} $w_i \in \setN$ and \emph{cost} $c_i \in \setN$. The task is to assign all the items to bins so that
 the items assigned to each  bin do not exceed its capacity. The objective function
 is to minimize the cost where we pay an amount of $c_i$ for each bin type $i$ that is being used. The study of this problem goes back at least to the 1960's to the classical paper of Gilmore and Gomory~\cite{Gilmore-Gomory61}. In particular bin packing is equivalent to the case where $m=1$.

 We will now explain how this problem can be solved using the test from Theorem~\ref{thm:MainGeneralTheorem}. First, using binary search we can reduce the optimization variant of cutting stock to the decision variant where for a given parameter $T \in \setZ_{\geq 0}$ we have to decide whether
 there is a solution of objective function at most $T$.
 The intuitive way to model the problem is by defining $m$ \emph{knapsack polytopes} that include a coordinate for the objective function as well as a \emph{target polytope} by setting
 \[
P_i := \Big\{ \begin{pmatrix} x \\ c_i \end{pmatrix} \in \setR^{d+1} \mid \sum_{j=1}^d x_js_j \leq w_i\textrm{ and }x_j \geq 0 \; \forall j \in [d] \Big\} \quad \forall i \in [m] \quad \quad \textrm{and} \quad Q := \Big\{ \begin{pmatrix} a \\ t \end{pmatrix} \in \setR^{d+1} \mid 0 \leq t \leq T \Big\}
 \]
 Then the cutting stock problem indeed has a solution with objective function value at most $T$, if and only if
 \[
\textrm{int.cone}\big( (P_1 \cup \ldots \cup P_m) \cap \setZ^{d+1}\big) \cap Q \neq \emptyset.
 \]
 The issue with this approach is that in general the union of polytopes is non-convex and Theorem~\ref{thm:MainGeneralTheorem} only applies to convex polytopes. But as we have integer variables at our disposal, this issue can be fixed using the \emph{Big-$M$ method}.

\begin{theorem}
The cutting stock problem with $d$ different item types and $m$ different bin types
can be solved in time $(\log \Delta)^{2^{O(d+m)}}$ where $\Delta := \max\{ \|c\|_{\infty},\|w\|_{\infty},\|s\|_{\infty},4\}$. %$M = M := \max\{ s_j,w_i,c_i,T \mid j \in [d],i \in [m]\}+1$. % is the largest number in the input.
\end{theorem}

\begin{proof}
 We  set  $M := \max\{ w_i,c_i \mid i \in [m]\}$ and define
 \[
   \tilde{P} := \left\{ \begin{pmatrix} x \\ y \\ z \end{pmatrix} \in \setR^{d+1+m}_{\geq 0} \mid \begin{array}{rcl}  \sum_{j=1}^d s_jx_j &\leq& w_i + (1-z_i) M \; \forall i \in [m] \\
                                                                                                    y &\leq& c_i + (1-z_i)M \; \forall i \in [m] \\
                                                                                                    y &\geq & c_i - (1-z_i)M \; \forall i \in [m] \\
   \sum_{i \in [m]} z_i &=& 1
                                                                                                  \end{array} \right\}
                                                                                                \quad \textrm{and} \quad \tilde{Q} := \left\{\begin{pmatrix} a \\ t \\ z\end{pmatrix} \mid 0\leq t \leq T   \right\}
 \]
 Here, $z_i$ is the decision variable telling whether the vector $(x,y,z)$ originates from bin type $i$. First, we claim that
  \begin{equation}
\tilde{P} \cap \setZ^{d+1+m} = \bigcup_{i=1}^m \left\{ \begin{pmatrix} x \\ c_i \\ e_i \end{pmatrix} \mid x \in \setZ_{\geq 0}^d \;\; \textrm{and} \;\; \sum_{j=1}^d s_jx_j \leq w_i \right\} \label{eq:UnionForCuttingStock}
  \end{equation}
where $e_i$ is the $i$th unit vector in $\setR^m$.  So, consider a vector $(x,y,z) \in \tilde{P} \cap \setZ^{d+1+m}$. Then by integrality and the constraint $\sum_{i \in [m]} z_i=1$, there is a fixed index $i$ with $z = e_i$. From the other constraints in $\tilde{P}$ we can infer that $y = c_i$ and $\sum_{j=1}^d s_jx_j \leq w_i + (1-z_i)M = w_i$. In reverse, we can see that any vector $(x,c_i,e_i)$ that is contained in the right hand side of \eqref{eq:UnionForCuttingStock} satisfies the constraints of $\tilde{P}$. In particular for $i' \neq i$ one has $\sum_{j=1}^d s_jx_j \leq w_i \leq w_{i'} + (1-z_{i'})M$
  as well as $c_i \leq c_{i'} + (1-z_{i'})M$.
  Hence the test
  \[
\textrm{int.cone}\big(\tilde{P} \cap \setZ^{d+1+m}\big) \cap \tilde{Q} \neq \emptyset
  \]
  correctly decides whether the cutting stock instance has a solution of value at most $T$. It remains to analyze the running time. 
  The maximum absolute value in the constraint matrices describing $\tilde{P}$
  and $\tilde{Q}$ is upper bounded by $\max\{ \Delta,2M,T\} \leq  d\Delta^2$ where we use $M \leq \Delta$ and\footnote{We can limit $T$ by the following argument: the cutting stock instance can only be feasible if $\|s\|_{\infty} \leq \|w\|_{\infty}$. Then assigning each item to a single bin of maximum capacity results in a solution of cost at most $\|a\|_1 \cdot \|c\|_{\infty}$.} $T \leq \|a\|_1\cdot \|c\|_{\infty} \leq d\Delta^2$. Moreover the number of variables is $d+1+m$ and the number of constraints is $O(d+m)$.
  Then the encoding length of both polytopes are bounded by $\textrm{enc}(\tilde{P}),\textrm{enc}(\tilde{Q}) \leq (m+d)^{O(1)}\cdot O(\log(d\Delta))$ and the claim follows from Theorem~\ref{thm:MainGeneralTheorem}.
\end{proof}
From a more abstract point of view, we have solved the cutting stock problem by
reducing it to a test of the form
 \[
\textrm{int.cone}\big( X_1 \cup \ldots \cup X_m \big) \cap Q \neq \emptyset
\]
with $X_i = P_i \cap \setZ^{d}$, where $P_1,\ldots,P_m$ are polytopes and hence
convex sets. While this approach was sufficient for cutting stock, it is 
more restricted than necessary. In fact, we can also handle sets of the form
\[
 X_i = \big\{ x \in \setZ^{d} \mid \exists y \in \setZ^{d_i} : A^ix + B^iy \leq b^i \big\}.
\]
Such sets are called \emph{integer projections} and are extremely powerful.
One should note that integer projections are in general not closed under
taking convex combinations: it is possible that $x,y \in X_i$ with $\frac{1}{2}x + \frac{1}{2}y$ integral while $(\frac{1}{2}x + \frac{1}{2}y) \notin X_i$.
In fact, one can even show that \emph{every} finite set is an integer projection.
%, we cannot imply that 
%The crucial quantity that will determine
Whether an approach using integer projections is efficient 
will be determined largely by the number $d_i$ of extra variables as we will see.

%\begin{proof}
%We can represent the multiset of items  $x \in \setZ_{\geq 0}^d$ that fit into 
%a bin of size $w_j$ by $X_j := \{ x \in \setZ_{\geq 0}^d \mid \sum_{i=1}^d s_ix_i \leq w_j\}$.
%Hence, using $c_j$ as machine cost, the claim follows immediately from 
%Theorem~\ref{thm:GeneralScheduling}.
%\end{proof}

\subsection{Integer conic combinations of unions of integer projections}

Motivated by further applications to high multiplicity scheduling, we now give
a generalization of the test from Theorem~\ref{thm:MainGeneralTheorem} to
unions of integer projections.
\begin{theorem} \label{thm:IntegerConicCombinationsOfUnionsOfIntegerProjections}
  Given rational polytopes $P_1,\ldots,P_m \subseteq \setR^{d+d_i}$ and a rational polyhedron $Q \subseteq \setR^d$,
  define $X_i := \{ x \in \setZ^d \mid \exists y \in \setZ^{d_i} : (x,y) \in P_i \}$ and $d_{\textrm{aux}} := \max\{ d_i : i \in [m]\}$. Then there is an algorithm that decides correctly whether
  \[
\textrm{int.cone}(X_1 \cup \ldots \cup X_m) \cap Q \neq \emptyset
\]
in time $(\sum_{i=1}^m \enc(P_i))^{2^{O(d + d_{\textrm{aux}}+m)}} \cdot \textrm{enc}(Q)^{O(1)}$. In the affirmative case the algorithm provides a vector $\lambda = (\lambda_{i,x})_{i \in [m], x \in X_i}$ with $\lambda_{i,x} \in  \setZ_{\geq 0}$ and $(\sum_{i=1}^m \sum_{x \in X_i} \lambda_{i,x} \cdot x) \in Q$.
Moreover, the size of the support of $\lambda$ is bounded by $2^{2(d+d_{\textrm{aux}}+m)+1}$.
\end{theorem}

\begin{proof}
  Let us write $P_i = \{ (x,y) \in \setR^{d+d_i} \mid A^ix + B^iy \leq b^i \}$ and $Q= \{ x \in \setR^d \mid A'x \leq b' \}$, again assuming after multiplying with the least common denominator that $A^i,B^i,b^i$ have only integer coefficients for all $i \in [m]$. Without loss of generality
  we may assume\footnote{More formally, this means we can replace each polytope $P_i = \{ (x,y) \in \setR^{d+d_i} \mid A^i x + B^i y \leq b^i\}$ by a polytope $P_i = \{ (x,y) \in \setR^{d + d_{\textrm{aux}}} \mid A^i x + (B^i,\bm{0})y \leq b^i; y_j=0 \; \forall d_i<j\leq d_{\textrm{aux}}\}$ without changing feasibility of the problem.} that $d_i = d_{\textrm{aux}}$ for all $i \in [m]$. We abbreviate $\Delta := \max\{ \|A^i\|_{\infty}, \|B^i\|_{\infty},\|b^i\|_{\infty} : i \in [m]\}$ as the largest coefficient in the representation of any of the polytopes $P_i$. From Lemma~\ref{lem:InfinityNormOfVertex} we know that each point
  $(x,y) \in P_i \cap \setZ^{d + d_{\textrm{aux}}}$ satisfies $\|(x,y)\|_{\infty} \leq (d+d_{\textrm{aux}})! \cdot \Delta^{d + d_{\textrm{aux}}}$. Again, we use the Big-$M$ method to convert the union of polytopes into a single polytope. The choice of $M := (d+d_{\textrm{aux}}+1) \cdot (d+d_{\textrm{aux}})! \cdot \Delta^{d + d_{\textrm{aux}}}$ will be large enough. We set
%  $\beta_i \in \{ 0,1\}^{\log_2(m)}$ be the binary encoding of the number $i$.
  \[
\tilde{P} := \left\{ \big( x, y, z \big) \in \setR^{d + d_{\textrm{aux}} + m} \mid A^ix + B^iy \leq b^i + (1-z_{i}) \cdot M \; \forall i \in [m]; \;\; \sum_{i=1}^m z_i=1; z_i \geq 0 \;\;\forall i \in [m] \right\}
  \]
  and extend the target polyhedron to
  $
   \tilde{Q} := Q \times \setR^{d_{\textrm{aux}}} \times \setR^m = \{ (x,y,z) \mid Ax \leq b \}
  $. We can prove that $\tilde{P}$ indeed encodes the union of $P_i$'s: \\
  {\bf Claim.} \emph{If $e_i$ denotes the $i$th unit vector in $\setR^m$, then one has}
  \begin{equation} \label{eq:IntConeOfIntProj_Ptilde}
    \tilde{P} \cap \setZ^{d + d_{\textrm{aux}} + m} = \bigcup_{i=1}^m \left\{ \big( x, y, e_i \big)  \mid (x,y) \in P_i \cap \setZ^{d + d_{\textrm{aux}}} \right\}.
  \end{equation}
  {\bf Proof of claim.} Let $(x,y,z) \in \tilde{P} \cap \setZ^{d + d_{\textrm{aux}} + m}$. Then by integrality and the constraint $\sum_{i=1}^m z_i=1$ we know that $z=e_i$ for some index $i \in [m]$. Then one has $A^ix+B^iy \leq b^i + \bm{0} \cdot M$ and hence $(x,y) \in P_i \cap \setZ^{d + d_{\textrm{aux}}}$. For the reverse direction, let $(x,y,e_i)$ be contained in the right hand side of \eqref{eq:IntConeOfIntProj_Ptilde}. Clearly the constraint $A^ix + B^iy \leq b^i + (1-z_i)M$ is satisfied. Now consider a constraint for $i' \neq i$. Then one has $A^{i'}x + B^{i'}y \leq d\|A^{i'}\|_{\infty}\|x\|_{\infty} + d_{\textrm{aux}}\|B^{i'}\|_{\infty}\|y\|_{\infty} \leq b^{i'} + (1-z_{i'})M$ using the bound  $\|(x,y)\|_{\infty} \leq (d+d_{\textrm{aux}})! \cdot \Delta^{d + d_{\textrm{aux}}}$ and making use of the choice of $M$. Hence $(x,y,e_i) \in \tilde{P}$ and the claim follows. \qed

  The proven claim implies that the condition $\textrm{int.cone}(X_1 \cup \ldots \cup X_m) \cap Q \neq \emptyset$ holds if and only if 
  \begin{equation} \label{eq:IntConeOfIntProj_equivTest}
\textrm{int.cone}(\tilde{P} \cap \setZ^{d + d_{\textrm{aux}} + m}) \cap \tilde{Q} \neq \emptyset
  \end{equation}
  Hence we can apply Theorem~\ref{thm:MainGeneralTheorem} to decide \eqref{eq:IntConeOfIntProj_equivTest}.
  In the affirmative case, Theorem~\ref{thm:MainGeneralTheorem} will return an integer conic combination $\tilde{\lambda}$ satisfying $\sum_{(x,y,z) \in \tilde{P} \cap \setZ^{\tilde{d}}} \tilde{\lambda}_{(x,y,z)} \cdot (x,y,z) \in \tilde{Q}$ where the support size is bounded by $2^{2\tilde{d}+1}$ with $\tilde{d} := d + d_{\textrm{aux}} + m$. Then for $i \in [m]$ and $x \in X_i$ we set $\lambda_{i,x} := \sum_{y} \tilde{\lambda}_{(x,y,e_i)}$ and return the vector $\lambda = (\lambda_{i,x})_{i \in [m],x \in X_i}$ as the desired integer conic combination.

  It remains to estimate the running time that is required by the algorithm behind Theorem~\ref{thm:MainGeneralTheorem}. From the construction of $\tilde{P} \subseteq \setR^{\tilde{d}}$ we see that $\enc(\tilde{P}) \leq \sum_{i=1}^m O(\enc(P^i)) + O(m\log(M))$. Finally, $\log(M) \leq (d+d_{\textrm{aux}})^{O(1)} \log(\Delta)$ is bounded by a polynomial in $\sum_{i=1}^m \enc(P_i)$. The Theorem then follows. %, while $\enc(\tilde{Q}) = \enc(Q)$.
\end{proof}

\subsection{A general high multiplicity scheduling framework}

%For this sake,
Next, we want to formulate a very general scheduling problem that captures
most scheduling problems.
%We want to introduce a scheduling framework that generalizes 
%most scheduling settings and
Then we will argue that this general problem can be solved in polynomial time
if the underlying parameters are constant. 

For our scheduling framework we assume to have \emph{job types} $j \in [d]$ with $a_j \in \setZ_{\geq 0}$ many copies of type $j$. Moreover, we have \emph{machine types} $i \in [m]$ with 
a maximum number $n_i \in \setZ_{\geq 0}$ of machines that are available. There are two types of cost: 
we incur a cost of $f_i$  for each machine of type $i$ that is used and we pay $c_{ij}$ for each copy of job $j$ that is assigned to a machine of type $i$. 

So far, we have not specified more parameters of the jobs, such as release times, deadlines and running times and we also have not specified whether we allow preemption or
not. For our general framework, the only requirement that we make is that 
the configurations of jobs that can be scheduled on a copy of machine $i$ can
be described as an integer projection
\[
  X_i = \big\{ x \in \setZ_{\geq 0}^d : \exists y \in \setZ^{d_i} : (x,y) \in K_i \big\}
\]
where $K_i$ is a polytope. In other words, if $x \in X_i$ then it has to be possible to 
schedule $x_j$ jobs of type $j$ on a machine of type $i$.
The performance of our method depends on the number of constraints in $K_i$ and 
most crucially on the number of extra variables $d_i$ that are used. 

In particular, this captures the classical settings of preemptive and non-preemptive
scheduling with release times, running times and deadlines. 
Formally, the \emph{general scheduling problem} is to assign the jobs to machines 
in order to minimize the cost, which can be written as an integer linear program
of the form
\begin{eqnarray}
 \min \sum_{i \in [m]} \sum_{x \in X_i} \Big(f_i + \Big(\sum_{j \in [d]} c_{ij}x_j\Big)\Big) \lambda_{i,x} & & \label{eq:GeneralSchedulingI} \\ % + \sum y_i(\sum)
 \sum_{i \in [m]} \sum_{x \in X_i} \lambda_{i,x} x &=& a \label{eq:GeneralSchedulingII} \\ 
 \sum_{x \in X_i} \lambda_{i,x} &\leq& n_i \quad \forall i \in [m] \label{eq:GeneralSchedulingIII} \\ 
  \lambda_{i,x} &\in& \setZ_{\geq 0} \quad \forall i \in [m] \; \forall x \in X_i \label{eq:GeneralSchedulingIV}
\end{eqnarray}
The variable $\lambda_{i,x}$ denotes how many machines of type $i$ should be packed with configuration $x$. The objective function pays a ``fixed cost'' of $f_i$ for each copy of machine $i$ plus a ``variable cost'' of $c_{ij}$ for each single job of type $j$ that is assigned to a machine $i$. The constraint~\eqref{eq:GeneralSchedulingII} forces
that every job copy is assigned and \eqref{eq:GeneralSchedulingIII} guarantees that we do not use more than $n_i$ copies of machine $i$.

\begin{theorem} \label{thm:GeneralScheduling}
Consider a general scheduling problem $\pazocal{I}$ with $d$ many job types, $m$ many machine types
represented by $X_i = \{ x \in \setZ_{\geq 0}^d \mid \exists y \in \setZ^{d_i} : (x,y) \in K_i\}$ with rational polytopes $K_i$.
The general scheduling problem can be solved in time $\textrm{enc}(\pazocal{I})^{2^{O(m + d + d_{\textrm{aux}})}}$
where $d_{\textrm{aux}} := \max_{i \in [m]} d_i$ is the maximum number of auxiliary variables
and $\enc(\pazocal{I})$ is the total encoding length of the vectors $c$, $f$ and $n$ plus the sum of the encoding lengths of the polytopes $K_i$ for $i \in [m]$.
\end{theorem}
\begin{proof}
We may assume that all the cost parameters $c_{ij}$ and $f_i$ are integers.
It suffices to find a solution of total cost $T$ or determine that there is none. 
The claim then follows by performing a binary search on $T$.
Let us write $K_i = \{ (x,y) \mid A^ix + B^iy \leq b^i\}$.
Now, define
\[
  P_i := \left\{ \big((x,c_i^Tx,f_i,e_i),y\big) \in \setR^{d} \times \setR \times \setR \times \setR^m \times \setR^{d_{i}}  \mid (x,y) \in K_i \right\}
\]
    and
 \[
      \tilde{X}_i := \Big\{ \big(x,c_i^Tx,f_i,e_i\big) \in \setZ^{d+2+m} \mid \exists y \in \setZ^{d_i}: \big((x,c_i^Tx,f_i,e_i),y\big) \in P_i\Big\}
\]
where $e_i$ is the $i$th unit vector in $\setR^m$.
Note that any vector in $\tilde{X}_i$ represents a packing of a machine of type $i$ where we have extra coordinates for  variable cost and an extra coordinate for the fixed cost $f_i$ as well as extra coordinates for the vector $e_i$.
Then the integer linear program from \eqref{eq:GeneralSchedulingI}-\eqref{eq:GeneralSchedulingIV} has a solution of value at most $T$, if and only if
\begin{equation} \label{eq:GeneralSchedulingIntConeCondition}
\textrm{int.cone}\big( \tilde{X}_1 \cup \ldots \cup \tilde{X}_m \big) \cap \tilde{Q} \neq \emptyset \quad \textrm{where} \quad \tilde{Q} = \left\{ \big(a,C,F,\mu \big) \mid \begin{array}{c} C+F \leq T \\ \mu \leq n \end{array}  \right\}
\end{equation}
Then we apply Theorem~\ref{thm:IntegerConicCombinationsOfUnionsOfIntegerProjections} to decide the condition in \eqref{eq:GeneralSchedulingIntConeCondition}.
For the running time note that the integer projections $\tilde{X}_i$ live in $\setR^{d + 2 + m}$ and require  $d_i \leq d_{\textrm{aux}}$ many extra integer variables.
\end{proof}

\subsection{Preemptive scheduling}

In the next two subsections, we want to discuss how the polytopes $K_i$ should be chosen
in order to represent preemptive and non-preemptive schedules. 
We begin with \emph{preemptive} scheduling (without migration). 
Note that once the assignment to machines is done, 
the \emph{Earliest-Deadline First policy} (EDF) gives an 
optimum preemptive schedule~\cite{EDFoptimalDertouzos74}. This allows to
check in polynomial time whether a given set of jobs is schedulable one a single machine, even
if the number of job types is not constant. 
\begin{lemma} \label{lem:K-for-preemptive-scheduline}
Given jobs $j \in [d]$, each with running time $p_j \in \setZ_{\geq 0}$, release time $r_j \in \setZ_{\geq 0}$ and 
deadline $d_j \in \setZ_{\geq 0}$. Then the set of multi-sets of such jobs that are EDF-schedulable
on a single machine can be described as $X = \{ x \in \setZ_{\geq 0}^d : x \in K\}$
where $K$ is a polytope with $\enc(K) \leq O(d^3 \cdot \log \Delta)$
and $\Delta := \max\{ p_j,r_j,d_j,2 : j \in [d] \}$. 
% $O(|J|^2)$ many inequalities. 
\end{lemma}
\begin{proof}
Consider a single machine of type $i$ and a vector $x \in \setZ_{\geq0}^d$ of jobs and we wonder how to
determine whether the jobs in $x$ can be scheduled on a single machine, i.e. how to test if the
EDF schedule of a set of jobs containing $x_j$ copies of job $j$ will meet all the deadlines.
If we consider a time interval $[t_1,t_2]$ then it is clear that the total 
running time of all jobs that have both, release time and deadline in $[t_1,t_2]$
cannot be larger than the length $t_2-t_1$, otherwise the schedule must be infeasible. 
In fact, for the EDF-scheduling policy, this is also a sufficient 
condition\footnote{This can be easily derived from Hall's condition
for the existence of perfect matchings in bipartite graphs and the optimality of EDF.}.
Moreover, it is clear that one does not need to consider \emph{all} time intervals, but
just those whose end points lie in the set $T := \{ r_{j},d_{j} \mid j\in[d]\}$ of critical points.
Thus we can define $K$ as the set of vectors $x \in \setR_{\geq0}^d$ such that
\begin{equation} 
  \sum_{\substack{j\in[d]: \\ r_{j},d_{j} \in [t_1,t_2]}} x_jp_{j} \leq t_2-t_1 \; \; \forall t_1,t_2 \in T: t_1\leq t_2. \label{eq:SchedulablePolytopeForEDF}
\end{equation}
Observe that a job vector $x \in \setZ_{\geq0}^d$ can be scheduled 
if and only if $x \in K$. The bound on $\enc(K)$ follows from the fact that $K$ has $d$ variables, $O(d^2)$
constraints and all coefficients are bounded by $\Delta$.
\end{proof}

\subsection{Non-preemptive scheduling}

Next, we consider scheduling without preemption. In contrast to the preemptive case, 
even in the single machine case,  finding a feasible non-preemptive 
schedule is $\mathbf{NP}$-hard~\cite{GareyJohnson79} in general. % (for general $d$, but $a_j = 1$).
%Again, we want to first investigate the case that we have a job vector $x \in \setZ_{\geq0}^d$ and a single machine of some type
%to schedule all these jobs. %, just that we use non-premptive scheduling this time. 
%For the moment, let us abbreviate the release times, deadlines and processing times with $r_j,d_j$ and $p_j$.
However, we will see that for fixed number $d$ of job types, the set of feasible schedules allows
a compact description: %a schedule can be found in polynomial time.
\begin{lemma} \label{lem:K-for-nonpreemptive-scheduline}
Given jobs $j \in [d]$, each with running time $p_j \in \setZ_{\geq 0}$, release time $r_j \in \setZ_{\geq 0}$ and 
deadline $d_j \in \setZ_{\geq 0}$. Then the set of multi-sets of such jobs that are schedulable
non-preemptively on a single machine can be described as 
$X = \{ x \in \setZ_{\geq 0}^d : \exists y \in \setZ^{O(d^2)}: (x,y) \in K\}$
with $\enc(K) \leq O(d^4 \log \Delta)$ where $\Delta := \max\{ p_j,r_j,d_j,2 : j \in [d] \}$. 
%where $\Delta$ is the largest number appearing in the input. 
%where $K$ is a polytope with $O(|J|^2)$ many inequalities.
\end{lemma}
\begin{proof}
  While we index the jobs of the instance as $j \in \{ 1,\ldots,d\}$, it will
  be notationally convinient to set $r_0 := 0, p_0:=1, d_{0} := \infty$, which allows us to use job 0 to represent any  idle time of the machine. 
%We index the jobs as . 
%  For notational convinience we add $\Delta$ many copies of a dummy job with parameters $p_0 = 1$, $r_0 = 0$ 
%and $d_0 := \infty$
%and multiplicity $x_0 := \Delta - \sum_{j=1}^d p_jx_j$. Now 
%so that we can assume there is no idle time in the schedule.

Let $T := \{ r_j,d_j \mid j \in [d] \} = \{ t_1,\ldots,t_{2d}\}$ be the $2d$ critical points, sorted so that 
$t_1 \leq \ldots \leq t_{2d}$. The crucial observation is that in a feasible schedule, we 
can arbitrarily permute jobs that have both start and end time in an interval $[t_k,t_{k+1}]$.
Let us imagine that the schedule is \emph{cyclic} in the sense that the schedule processes 
first some (possibly 0) copies of job type 0, then some (again, possibly 0) jobs of type 1, and so on until type $d$; 
then the scheduler starts again with jobs of type $0$. 
The interval from a job 0 interval to the beginning of the next job 0 interval is called a \emph{cycle}.
Note that the number of copies of job $j$ that are scheduled in a cycle can very well be $0$, so indeed
such a cyclic schedule trivially exists. Moreover, we want to restrict that a job of type $j$ is 
only allowed to be scheduled in a cycle if the \emph{complete} cycle is contained in $[r_j,d_j]$.
But again this restriction is achievable as we could split cycles if needed.

We claim that there exists a schedule with at most $4d$ many cycles. 
%Now consider the schedule with the least number of cycles. 
Following our earlier observation it is clear that whenever 2 cycles are completely contained in some 
interval $[t_{k},t_{k+1}]$ of consecutive points, then we could also join them. 
Hence we have at most $2d$ cycles contained in intervals of the form $[t_k,t_{k+1}]$ plus at most $2d$ cycles
that include a critical point.
%Thus we can assume that the schedule contains exactly $4d$ many cycles (maybe some 
%have length 0).

We introduce an auxiliary variable $y_{jk}$ which tells
us how many copies of job $j$ are processed in the $k$th cycle. Additionally
we have a binary variable $z_{jk}$ telling us whether jobs of type $j$ can be processed in the $k$th cycle. 
Moreover, the $k$th cycle runs in $[\tau_{k-1},\tau_{k}]$ (with $\tau_0 := 0$). Then the polytope $K$ whose integral 
points correspond to feasible schedules can be defined as
\begin{equation} \label{eq:NonPreemptiveSchedulingPolytope}
\begin{array}{rcll}
  x_j  &=&  \sum_{k=1}^{4d} y_{jk} & \forall j \in \{1,\ldots,d\} \\
  \tau_{k} &=& \sum_{\ell \leq k} \sum_{j=0}^d p_jy_{j\ell} & \forall k \in [4d] \\
  y_{jk} &\leq& \Delta \cdot z_{jk} & \forall j \in [d] \; \; \forall k \in [4d] \\
  \tau_{k-1} &\geq& r_j - \Delta (1-z_{jk}) & \forall j \in [d] \; \; \forall k \in [4d] \\
  \tau_{k} &\leq& d_j + \Delta (1-z_{jk}) & \forall j \in [d] \; \; \forall k \in [4d] \\
%  x_0 &=& \Delta - \sum_{j=1}^d x_jp_j \\
  y_{jk},\tau_{k} &\geq& 0 & \forall j \in \{0,\ldots,d\} \; \forall k \in [4d] \\
 % & & &  \\
  z_{jk} &\in& [0,1] & \forall j \in [d] \; \forall k \in [4d].
\end{array}
\end{equation}
For example if $y_{jk} > 0$ then this forces that $z_{jk} = 1$ and hence $r_j \leq \tau_{k-1} \leq \tau_k \leq d_j$.
% Note that this system has $O(d^2)$ many variables and inequalities.
A vector $x \in \setZ_{\geq0}^d$ can be non-preemptively scheduled if and only if
there are integral $\tau,y,z$ such that $(x,\tau,y,z) \in K$. Since $K$ has $O(d^2)$ variables and $O(d^2)$
constraints, we obtain the bound $\enc(K) \leq O(d^4 \log \Delta)$.
\end{proof}
One might be tempted to wonder whether the number of variables could be reduced
at the expense of more constraints, which might still improve the running time.
But for non-preemptive scheduling we run into the problem that the 
set of vectors $x$ that can be scheduled on a single machine is not closed under taking
convex combinations\footnote{A simple example is the following: consider a set of $d=3$ job types with $\{ (r_j,d_j,p_j) \mid j=1,2,3\} = \{ (0,300,150), (100,102,1), (200,202,1)\}$. The vectors $x' = (2,0,0)$ and $x'' = (0,2,2)$ can both be scheduled in a non-preemptive way. 
But the convex combination $\frac{1}{2}(x' + x'') = (1,1,1)$ cannot be scheduled.}. 
In fact, some additional variables are necessary to write those vectors
as integer projection of a convex set.

\subsection{A Scheduling Application}

As said earlier, many scheduling problems, that involve a constant number of job types and machine types
can be handled by our framework. For the sake of demonstration, let us describe one natural setting:
\begin{corollary}
Given job types $j \in [d]$ and machine types $i \in [m]$ where job $j$ has machine-dependent release time $r_{ij} \in \setZ_{\geq 0}$, deadline  $d_{ij} \in \setZ_{\geq 0}$
and running time $p_{ij} \in \setZ_{\geq 0}$ on a machine of type $i \in [m]$.
Moreover we have $a_j \in \setZ_{\geq 0}$ copies of job type $j$ and using a copy
of a machine of type $i \in [m]$ incurs a cost of $c_i \in \setZ_{\geq 0}$.
Then one can find an optimum assignment of jobs to machines minimizing the total machine cost 
under preemptive [non-preemptive, resp.] scheduling in time $(\log \Delta)^{2^{O(d+m)}}$ [$(\log \Delta)^{2^{O(d^2+ m)}}$, resp.],
where $\Delta := \max\{ \|a\|_{\infty},\|r\|_{\infty},\|d\|_{\infty},\|p\|_{\infty},4\}$.
\end{corollary}
\begin{proof}
For the preemptive case, choose $K_i$ as in Lemma~\ref{lem:K-for-preemptive-scheduline} with $d_{\textrm{aux}} = 0$. In the preemptive
case, we choose $K_i$ as in Lemma~\ref{lem:K-for-nonpreemptive-scheduline}
with $d_{\textrm{aux}} = O(d^2)$ auxiliary variables.
\end{proof}

It is not difficult to incorporate other objective functions. For example if the goal is
to \emph{minimize the makespan}, then one can perform a binary search on the target makespan $D$. 
In each search step one sets the deadlines $d_{ij}$ to the current value of $D$ and then runs Theorem~\ref{thm:GeneralScheduling} to check feasibility. 
As another example, consider the case that the objective function is to \emph{minimize the tardy jobs}. 
One can set the machine cost to 0 and add a dummy machine with infinite resources that charges
a cost of 1 per scheduled job. Then the minimum cost solution will try do maximize the jobs that 
are assigned to the regular machines. 
This concludes the discussion on scheduling. 

\section{Finding integer conic combinations for unbounded polyhedra\label{sec:IntConicComForUnboundedPolyhedra}}

So far we have only considered integer combinations $\textrm{int.cone}(P \cap \setZ^d)$
where $P$ was a \emph{bounded} polyhedron. Now we will discuss how that boundedness assumption can be removed.
We will see that $P$ can be decomposed so that the test $\textrm{int.cone}(P \cap \setZ^d) \cap Q \neq \emptyset$ is equivalent to a test
where points come from the union of two bounded polytopes. Such a test can then be decided
by our Theorem~\ref{thm:IntegerConicCombinationsOfUnionsOfIntegerProjections}.
%Recall that we already know that $\textrm{int.cone}( (P_1 \cup \ldots \cup P_m) \cap \setZ^d) \cap Q \neq \emptyset$
%can be tested in polynomial time if both $d$ and $m$ are constants and $P_1,\ldots,P_m \subseteq \setR^d$ are bounded polytopes. We will make use of this fact to compute integer conic combinations for unbounded polyhedra.
\begin{theorem}
Let $P,Q \subseteq \setR^d$ be rational polyhedra. Then the condition $\textrm{int.cone}(P \cap \setZ^d) \cap Q \neq \emptyset$ can be decided in time $\textrm{enc}(P)^{2^{O(d)}} \cdot \textrm{enc}(Q)^{O(1)}$. In the affirmative case, a vector $(\lambda_x)_{x \in P \cap \setZ^d}$ of support  $2^{O(d)}$ with $\sum_{x \in P \cap \setZ^d} \lambda_x x \in Q$ can be computed in the same time.
\end{theorem}
%\begin{proof}
%  Let $P = \{ x \in \setR^d \mid Ax \leq b\}$. Let $M := d!\Delta^d$. Then $P \cap [-M,M]^d$ contains all the extreme points of $P$. Consider the characteristic cone
%  $C := \{ x \in \setR^d \mid Ax \leq 0\}$. Then there are integer vectors $a_I \in \setZ^d$ so that $C = cone\{ a_1,\ldots,a_N\}$. By the same argument $\|a_i\|_{\infty} \leq M$. $int.cone(P \cap \setZ^d)$. Consider the zonotope $R$
%\end{proof}

\begin{proof}
  We write $P = \{ x \in \setR^d \mid Ax \leq b\}$.
  Again we may also assume that $A$ and $b$ have only integral entries and set $\Delta := \max\{ \|A\|_{\infty},\|b\|_{\infty},2\}$.
  The \emph{Minkowski-Weil Theorem} (see e.g. \cite{TheoryOfLPandIP-Schrijver1999}) tells us that 
  one can decompose $P = S + C$ where $S \subseteq \setR^d$ is a polytope and $C = \{ x \in \setR^d \mid Ax \leq \bm{0}\}$
  is the \emph{characteristic cone} of $P$. As $P$ is rational, a valid choice\footnote{If $P$ has vertices then by Lemma~\ref{lem:InfinityNormOfVertex}, $S$ contains all the vertices and the result is immediate. However, it is possible that $P$  contains lines. Hence, we consider a maximal index set $I \subseteq [d]$ so that the restriction $P' = \{ x \in P \mid x_i = 0 \; \forall i \in I\}$ still satisfies $P' + C = P$. Then $P'$ does not contain a line while it intersects every minimal face of $P$.  The minimal faces of $P'$ are vertices and  every  $x \in \textrm{vert}(P')$   has $\|x\|_{\infty} \leq d!\Delta^d = M$. Then $\textrm{vert}(P') \subseteq S$ which concludes the argument.}  is $S = P \cap [-M,M]^d$  where  $M := d!\Delta^d$.
  Also the cone $C$ is rational and hence $C = \textrm{cone}\{ a_1,\ldots,a_N\}$ for some integer vectors $a_1,\ldots,a_N \in \setZ^d$.
  Similar to before one can argue that coefficients of size $\|a_i\|_{\infty} \leq M$ suffice\footnote{Consider the polytope $C_{\textrm{bounded}} := C \cap [-1,1]^d$. We know that the vertices of $C_{\textrm{bounded}}$ span the cone, i.e. $\textrm{cone}(\textrm{vert}(C_{\textrm{bounded}})) = C$. While the vertices of $C_{\textrm{bounded}}$ are in general not integral, we can scale them to become integral with bounded coefficients. More precisely by Cramer's rule, every $x \in \textrm{vert}(C_{\textrm{bounded}})$ is of the form $x = (\frac{\det(R_1)}{\det(T)},\ldots,\frac{\det(R_d)}{\det(T)})$ where $R_1,\ldots,R_d,T$ are $d \times d$ matrices filled with entries from $\{ -\Delta,\ldots,\Delta\}$. Then $\det(T) \cdot x \in \setZ^d$ and moreover $\|\det(T) \cdot x\|_{\infty} = \max_{i \in [d]} |\det(R_i)| \leq d!\Delta^d = M$.}  for all $i \in [N]$.
  The next step is to restrict the cone to a bounded region by setting $R := C \cap [-dM,dM]^{d}$.
  In the following we abbreviate $\textrm{int.cone}_{+}(X) := \{ \sum_{x \in X} \lambda_x x \mid \lambda_x \in \setZ_{\geq 0}\textrm{ and }\lambda \neq \bm{0} \}$ as the integer conic combinations of a set $X$ where the combination with $\lambda = \bm{0}$ is excluded. Now we can analyze the decomposition: \\
  {\bf Claim I.} \emph{One has $\textrm{int.cone}_+(P \cap \setZ^d) = \textrm{int.cone}_+((S +R)\cap \setZ^d) + \textrm{int.cone}(R \cap \setZ^d)$}. \\
  {\bf Proof of Claim I.}
  For the direction ``$\subseteq$'', we will show that $P \cap \setZ^d \subseteq \textrm{int.cone}_+((S +R)\cap \setZ^d) + \textrm{int.cone}(R \cap \setZ^d)$ holds. Consider a point $x \in P \cap \setZ^d$
  and write it as $x = s + c$ where $s \in S$ and $c \in C$ (here neither $s$ nor $c$ is guaranteed to be integral). Then by the definition of the cone $C$, we can write $c = \sum_{i=1}^N \mu_ia_i$ for some coefficients $\mu_i \in \setR_{\geq 0}$ where
  by \emph{Carath\'eodory's Theorem} (again see e.g. \cite{TheoryOfLPandIP-Schrijver1999})  we may assume that $|\textrm{supp}(\mu)| \leq d$. 
  We write
  \[
 c = \sum_{i=1}^N \mu_ia_i = \underbrace{\sum_{i=1}^N (\mu_i - \lfloor \mu_i \rfloor) a_i}_{=: r} + \underbrace{\sum_{i=1} ^N\lfloor \mu_i \rfloor a_i}_{=: c_{\textrm{int}}}
  \]
  where $c_{\textrm{int}} \in \textrm{int.cone}\{ a_1,\ldots,a_N\} \subseteq \textrm{int.cone}(R \cap \setZ^d)$. Moreover we know that $r \in R$
  because $\sum_{i=1}^N (\mu_i-\lfloor \mu_i \rfloor) \leq d$ and $\|a_i\|_{\infty} \leq M$ for $i \in [N]$. Then $x = (s + r) + c_{\textrm{int}}$ where $x$ and $c_{\textrm{int}}$ are integer vectors and so $(s + r) \in (S+R) \cap \setZ^d$.
  
  For the direction ``$\supseteq$'', take integer conic combinations $\sum_{x \in (S+R) \cap \setZ^d}\lambda_x x$ with $\lambda_x \in \setZ_{\geq 0}$ and $\lambda \neq \bm{0}$ and $\sum_{y \in R \cap \setZ^d} \mu_y y$ with $\mu_y \in \setZ^d_{\geq 0}$. Fix a vector $x^* \in (S+R) \cap \setZ^d$ with $\lambda_{x^*} \geq 1$ (this is where we need that $\lambda \neq \bm{0}$!). Set $x^{**} := x^* + \sum_{y \in R \cap \setZ^d} \mu_yy$ and note that $x^{**} \in P \cap \setZ^d$. Define a new integer conic combination $\tilde{\lambda}$ by
  \[
    \tilde{\lambda}_{x^{**}} := \lambda_{x^{**}}+1, \quad \tilde{\lambda}_{x^*} := \lambda_{x^*}-1, \quad \tilde{\lambda}_{x} := \lambda_x \; \forall x \in ((S+R) \cap \setZ^d) \setminus \{ x^*,x^{**}\}, \quad \tilde{\lambda}_{x} := 0 \; \textrm{otherwise}\footnote{In the fringe case where $x^{**} = x^{*}$, set $\tilde{\lambda}_{x}:=\lambda_{x}$ for all $x$.}. %for    x \in (P \setminus (S+R) \setmiuns \{ x^*,x^**}) \cap \setZ^{d} 
  \]
  In words, we have moved all the weight from points in $R \cap \setZ^d$ to a single copy of a single point in $P \cap \setZ^d$.  Then  $\sum_{x \in P \cap \setZ^d}\tilde{\lambda}_xx=\sum_{x \in (S+R) \cap \setZ^d} \lambda_xx + \sum_{y \in R \cap \setZ^d} \mu_yy$ which shows Claim I.
  \qed

  It appears that the overall Theorem almost follows by applying Theorem~\ref{thm:IntegerConicCombinationsOfUnionsOfIntegerProjections}, if there was not the slight issue that Claim I contains the expression $\textrm{int.cone}_+(\ldots)$ which forbids the all-$0$ coefficients.
  However, this issue can be fixed by extending the dimension of the polytopes by 1.
  In particular we will use $P \times \{ 1\} = \{ { x \choose 1}  \mid x \in P\}$ which is a $(d+1)$-dimensional polyhedron.
    Note that if $\bm{0} \in Q$, then trivially $\textrm{int.cone}(P \cap \setZ^d) \cap Q \neq \emptyset$.
    Hence we may assume from now on that $\bm{0} \not\in Q$. Our goal is to reformulate the target condition
    \begin{equation}
    \textrm{int.cone}(P \cap \setZ^d) \cap Q \neq \emptyset.  \label{eq:IntConicCombUnbounded-EquivalenceI}
  \end{equation}
  First note that as $\bm{0} \notin Q$, the coefficient vector $\lambda = \bm{0}$ is useless for \eqref{eq:IntConicCombUnbounded-EquivalenceI} and so \eqref{eq:IntConicCombUnbounded-EquivalenceI} is equivalent to
      \begin{equation}
    \textrm{int.cone}_+\big( (P \times \{ 1\}) \cap \setZ^{d+1}\big) \cap \big(Q \times [1,\infty)\big)  \label{eq:IntConicCombUnbounded-EquivalenceII} \neq \emptyset.
  \end{equation}
  Then by Claim I, \eqref{eq:IntConicCombUnbounded-EquivalenceII} is equivalent to
  \begin{equation}
\Big(\textrm{int.cone}_+\big(((S+R) \times \{ 1\}) \cap \setZ^{d+1}\big) + \textrm{int.cone}\big((R \times \{ 0\}) \cap \setZ^{d+1}\big)\Big) \cap \big(Q \times [1,\infty)\big) \neq \emptyset \label{eq:IntConicCombUnbounded-EquivalenceIII}
\end{equation}
Then \eqref{eq:IntConicCombUnbounded-EquivalenceIII} is equivalent to
\begin{equation}
  \textrm{int.cone}\Big(\big(((S+R) \times \{ 1\}) \cap \setZ^{d+1}\big) \cup \big((R \times \{ 0\}) \cap \setZ^{d+1}\big)\Big) \cap \big(Q \times [1,\infty)\big) \neq \emptyset, \label{eq:IntConicCombUnbounded-EquivalenceVI}
\end{equation}
where the extra coordinate enforces that at least one point from $((S+R) \times \{ 1\}) \cap \setZ^{d+1}$ needs to have positive weight in any valid integer conic combination.
Now we call Theorem~\ref{thm:IntegerConicCombinationsOfUnionsOfIntegerProjections} to decide condition~\eqref{eq:IntConicCombUnbounded-EquivalenceVI} using that $(S+R) \times \{ 1\}$ and $R \times \{ 0\}$ are bounded.
Note that in the affirmative case, an integer conic combination  satisfying \eqref{eq:IntConicCombUnbounded-EquivalenceVI} can be transformed into one using points of $P \cap \setZ^d$ using the argument from Claim I.

For the running time, note that Theorem~\ref{thm:IntegerConicCombinationsOfUnionsOfIntegerProjections} is called for $(d+1)$-dimensional polytopes whose encoding length is bounded polynomially in $\textrm{enc}(P)$.
 % Any point $y \in X$ can be written as $y = \sum_{x \in P \cap \setZ^d} \lambda_x \begin{pmatrix} x \\ 1 \end{pmatrix} \in X$. Then 
\end{proof}

\section{The Eisenbrand-Shmonin Theorem is tight\label{sec:EisenbrandShmoninIsTight}}

In this section, we want to describe an example that shows that the Eisenbrand-Shmonin result described in Lemma \ref{lem:ExistenceOfIntegralSolWithSupport2d}
is tight up to a factor of 2. To the best of our knowledge, this was not known before. 
%The Eisenbrand-Shmonin Theorem says that any integer conic 
%For points $X \subseteq \setR^d$, let $\textrm{int.cone}(X) = \{ \sum_{x \in X} \lambda_x x \mid \lambda_x \in \setZ_{\geq 0} \; \forall x \in X\}$ be the set of \emph{integer conic combinations}.
%Recall the following theorem of Eisenbrand \& Shmonin:
%\begin{theorem}
%Let $P \subseteq \setR^d$ be a convex set and $y \in \textrm{int.cone}(P \cap \setZ^d)$. 
%Then there is an integer conic combination $\lambda \in \setZ_{\geq 0}^{P \cap \setZ^d}$ with $y = \sum_{x \in P \cap \setZ^d} \lambda_x %x$  with $|\supp(\lambda)| \leq 2^d$.
%\end{theorem}
%We want to argue that this is essentially tight. 
%Let us define $\textrm{minsup}(y,P) = \min\{ |\supp(\lambda)| \mid \lambda \in \setZ_{\geq 0}^{P \cap \setZ^d}, \sum_{x \in P \cap \setZ^d} \lambda_x x = y\}$ as the minimal support 
%of an integer conic combination
%defining $y$ (to be well defined, say $\textrm{minsup}(y,P) = 0$ if there is no combination possible at all).
%Then the Eisenbrand-Shmonin Theorem can be reinterpreted as saying that
%$\textrm{minsup}(y,P) \leq 2^d$ for all $y \in \textrm{int.cone}(P \cap \setZ^d)$ were $P$ is convex.
Fix a dimension $d\geq 2$ and let $k := 2^{d-1}$. Let's define a set $X\subset \setZ^{d}$ of $k$ points as
\[
\hspace{-0.3cm} \left\{ \big(1+x_1,\ldots,1+x_{d-1},(4k)^{1+\sum_{i=1}^{d-1} 2^{i-1} x_i}\big) \mid x_i \in \{ 0,1\} \;\; \forall i \in [d-1] \right\}.
\]
For example, for $d=3$ we obtain
\[
  \begin{pmatrix} 1 \\ 1 \\ (4k) \end{pmatrix},  
\begin{pmatrix} 2 \\ 1 \\ (4k)^2 \end{pmatrix}, 
 \begin{pmatrix} 1 \\ 2 \\ (4k)^3 \end{pmatrix},  \begin{pmatrix} 2 \\ 2 \\ (4k)^4 \end{pmatrix}.
\]
We sort $X = \{ a_1,\ldots,a_k \}$ according to their length, i.e.  $\| a_i\|_{\infty} = (4k)^i$ and define $P := \conv{X}$. %Due to the first $d-1$ coordinates, 

\begin{lemma}
The integer conic combination $y := a_1 + \ldots + a_k$ is unique, thus there are $2^{d-1}$ 
points necessary to obtain $y$ as an integer conic combination of points in $P \cap \setZ^d$. %have $ \textrm{minsup}(  y,  P)  = k$.
\end{lemma}
\begin{proof}
First, we observe that there is
no other integer point in the convex hull $P$ because of the first $d-1$ coordinates of the $a_i$'s. 
In other words $P \cap \setZ^d = X$. We want to argue that due to the enormous growth of the 
last coordinate, each $a_i$ has to be used exactly once in order to obtain $y$.
%$y = a_1 + \ldots + a_k$ is the only possible integer conic combination. 
For this sake, consider any integer conic combination
\[
   y = \sum_{i=1}^k \lambda_i a_i.
\]
Note that $y_j = 3\cdot  2^{d-2}$ for $j \in \{ 1,\ldots,d-1\}$, thus $0 \leq \lambda_i \leq 3\cdot 2^{d-2} < 2k$
for $i=1,\ldots,k$. 
We want to argue that $\lambda_1 = \ldots = \lambda_k = 1$ is the only possibility.
Suppose this is not the case and let  $i^*$ be the largest index with $\lambda_{i^*} \neq 1$. We will see that if $\lambda_{i^*}=0$, then
the combined vector is too short and if $\lambda_{i^*} \geq 2$, then it is too long. 
More formally, we inspect the difference
\begin{eqnarray}
   \Big\| \sum_{i=1}^k (\lambda_i-1)a_i \Big\|_\infty  % = \sum_{i=i^*}^k (\lambda_i-1)a_i
%\geq |\lambda_{i^*}-1| \cdot \|a_{i^*}\|_2 - \sum_{i=i^*}^k \underbrace{\lambda_i}_{\leq 2^n} \|a_i\|_{2} > ( |\lambda_{i^*}-1| - \frac{1}{2}) \cdot \|a_{i^*}\|_2 
&\geq& |\lambda_{i^*}-1| \cdot \|a_{i^*}\|_{\infty} - \sum_{i=1}^{i^*-1} \underbrace{\lambda_i}_{\leq 2k} \cdot \underbrace{\|a_i\|_{\infty}}_{= (4k)^i} \label{eq:ES-is-tightI} \\
&\geq & |\lambda_{i^*}-1| \cdot (4k)^{i^*} - \sum_{i=1}^{i^*-1} 2k \cdot (4k)^{i} \label{eq:ES-is-tightII} \\
& \geq & (4k)^{i^*} \cdot \Big(|\lambda_{i^*}-1| - \frac{1}{2} \underbrace{\Big(\sum_{j\geq 0} (4k)^{-j}\Big)}_{\leq 3/2} \Big) \label{eq:ES-is-tightIII} \\
&\geq&  \underbrace{\Big(|\lambda_{i^*}-1| - \frac{3}{4}\Big)}_{\geq 1/4} \cdot (4k)^{i^*} > 0 \label{eq:ES-is-tightIV} 
\end{eqnarray}
In \eqref{eq:ES-is-tightI} we use the reverse triangle inequality and the fact
that $|\lambda_i-1|=0$ for $i>i^*$. In \eqref{eq:ES-is-tightII} we make use of
$\lambda_i \leq 2k$ and $\|a_i\|_{\infty} = (4k)^i$.
To bound the sum in \eqref{eq:ES-is-tightIII} we use that $2k\geq 4$. Finally in
\eqref{eq:ES-is-tightIV} we reach the conclusion of
$\big\| (\sum_{i=1}^k \lambda_ia_i)-y \big\|_\infty > 0$ which is a contradiction
to the assumption that $\lambda$ is a valid integer conic combination for $y$ .
%But if $\lambda_{i^*} \neq 1$, then the length of this vector is
%strictly larger than $0$, hence $\lambda$ does not correspond to a valid integer conic combination for $y$. 
\end{proof}
Note that $\|a_k\|_{\infty} = (4k)^k = 2^{\Theta(d2^{d})}$, hence our construction 
uses numbers that are doubly-exponentional in $d$.
An argument of~\cite{IntegerConesEisenbrandShmoninORL06} based on the pigeonhole principle shows that for a set $X \subseteq \setZ^d$, every point in $\textrm{int.cone}(X)$ admits an integer conic combination 
whose support size is bounded by
$O(d \cdot \log (d M))$, where $M := \max\{ \|x\|_{\infty} \mid x \in X\}$. 
In other words, any set of integral vectors $X$ where some vector $\textrm{int.cone}(X)$
requires a support of size $\Omega(2^d)$ must have $M \geq 2^{\Omega(2^d)}$.
%contain integer points with coordinates as 
%large as $2^{\Omega(2^d)}$, so the 
Hence the doubly exponentially large numbers are indeed necessary.

\paragraph{Follow-up work.}

After publication of the conference version of this paper, Jansen and Klein~\cite{StructureOfIntegerCone-JansenKlein-SODA2017} proved a modification of the Structure Theorem~\ref{thm:StructureTheorem} where the points 
$X$ are simply the vertices of $\textrm{conv}(P \cap \setZ^d)$. Interestingly, in order for that Theorem to 
be true, the used number of points from $(P \cap \setZ^d) \setminus X$ counted with multiplicity 
has to be increased from $2^{\Theta(d)}$ to $2^{2^{\Theta(d)}}$. Their main algorithmic result is the
following:
\begin{theorem}[Jansen-Klein~\cite{StructureOfIntegerCone-JansenKlein-SODA2017}]
Given rational polytopes $P, Q \subseteq \setR^d$, one can find a vector $y \in  \textrm{int.cone}(P \cap \setZ^d) \cap Q$ and
a vector $\lambda \in \setZ_{\geq0}^{P \cap \setZ^d}$ in time
$|\textrm{vert}(P_I)|^{2^{O(d)}} \cdot \textrm{enc}(P)^{O(1)} \cdot \textrm{enc}(Q)^{O(1)}$
such that $y = \sum_{x \in P \cap \setZ^d} \lambda_x x$, or decide that no such $y$ exists.
%Moreover, the support of $\lambda$ is always bounded by $2^{2d+1}$. %\marginpar{If Eisenbrand-Shmonin is polynomial, then we can put $2^d$ here.}
\end{theorem}
Here $\textrm{vert}(P_I)$ are the extreme points of the convex hull of integers points in $P$. Plugging in the worst case bounds of Cook et al.~\cite{IntegerPointsInPolyhedra-CookHartmannKannanMcDiarmid-Combinatorica92, ComplexityOfIntegerHull-TechReport-Hartmann1988} does not improve over our running time of $O(\log \Delta)^{2^{O(d)}}$ even
for bin packing in $d$ dimensions. However, one might imagine applications
where the polytopes in question are guaranteed to have rather few vertices
so that the approach of \cite{StructureOfIntegerCone-JansenKlein-SODA2017} dominates.

\paragraph{Open problems.}

A natural open problem that arises from this work is whether the double exponential running time is necessary. We pose it as an open problem whether bin packing in $d$
dimensions can be solved in time $f(d) \cdot O(\log(\Delta))^{O(1)}$, where $f(d)$
is an arbitrary function depending on the dimension. Phrased differently, we ask
whether bin packing is \emph{fixed-parameter trackable} with the dimension as parameter.

We would like to point out that there are problems in high multiplicity scheduling
that remain unsolved. For example, imagine that we are given $d$ different rectangles
with dimensions $w_i \times h_i$ and multiplicity $n_i$ as well as a $W \times H$ sized bin. The question is whether all the rectangles can be packed into this bin (without rotating the rectangles). Schiermeyer~\cite{PackingRectangles-Schiermeyer-ESA94} conjectures that this problem is
solvable in polynomial time if $d$ is fixed. Our framework
does not seem to apply (at least not without structural insights into the optimum rectangle packing).

%Moreover, Bin Packing admits a very strong column based linear program
%$OPT = \lceil OPT_f \rceil + 1$.

%Note that overall, this does not improve the asymptotic worst-case running time of 
%$O(\log \Delta)^{2^{O(d)}}$ for Bin Packing in $d$ dimensions.

\bibliographystyle{alpha}
\bibliography{BinPackingInFixedDim}

\end{document}